\newtheorem{theorem}{Theorem}
\newtheorem{cor}{Corollary}
\newtheorem{lemma}{Lemma}
\renewcommand{\paragraph}[1]{\medskip\noindent\textbf{#1.}}
\newcommand{\changed}[1]{{\color{black} #1}}
\newcommand{\newnote}[1]{{\color{cyan} #1}}
\newcommand{\remove}[1]{{}}
\newcommand{\AMG}{angle-monotone graph }
\title{Angle-Monotone Graphs: Construction and Local Routing\footnote{This work is partially supported by NSERC.}}
\author{Anna Lubiw} 
\affil[1]{Cheriton School of Computer Science,  University of Waterloo,  Canada\\
  \texttt{alubiw@uwaterloo.ca}} 
\author{Debajyoti Mondal}
\affil[2]{Department of Computer Science, University of Saskatchewan, 
   Canada\\
  \texttt{d.mondal@usask.ca}}
\begin{document}

\maketitle

\begin{abstract}
A geometric graph in the plane is \emph{angle-monotone of width $\gamma$} if every pair of vertices is connected by an \emph{angle-monotone path of width~$\gamma$}, a path such that the angles of any two edges in the path differ by at most $\gamma$. Angle-monotone graphs have good spanning properties.

We prove that every point set in the plane admits an angle-monotone graph of width $90^\circ$, hence with spanning ratio $\sqrt 2$, and  a subquadratic number of edges.  
This answers an open question posed by Dehkordi, Frati and Gudmundsson. 

We show how to construct, for any point set of size $n$ and any angle $\alpha$, $0 < \alpha < 45^\circ$, an angle-monotone graph of width $(90^\circ+\alpha)$ with $O(\frac{n}{\alpha})$ edges.  
Furthermore, we give a local routing algorithm to find angle-monotone paths of width $(90^\circ+\alpha)$ in these graphs.
 The \emph{routing ratio}, which is the ratio of path length to Euclidean distance, is at most $1/\cos(45^\circ + \frac{\alpha}{2})$,
i.e.,~ranging from $\sqrt 2 \approx 1.414$ to $2.613$.    
 For the special case  $\alpha = 30^\circ$,  we obtain
 the $\Theta_6$-graph and our routing algorithm achieves the known routing ratio 2 while finding angle-monotone paths of width $120^\circ$.
\end{abstract}

\section{Introduction}
\label{sec:introduction}
The problem of constructing a geometric graph on a given set of points in  the plane
 so that the graph is sparse yet has good spanning and/or routing properties has been very well-studied. 
The basic goal is to guarantee paths that are relatively short, and to be able to find such paths using local routing. 
Two fundamental concepts in this regard are \emph{spanners} and \emph{greedy graphs}. A geometric graph 
is a \emph{$t$-spanner} if there is a path of stretch factor $t$ between any two vertices, i.e., a path whose length is at most $t$ times the Euclidean distance between the endpoints.  
A geometric graph 
 is \emph{greedy} if there is a path between every two vertices such that each intermediate vertex is closer to the destination than the previous 
 vertex on the path.  
 Greedy graphs permit \emph{greedy routing} where a path from source to destination is found by the local rule of moving from the current vertex to any neighbor that is 
 closer to the destination. 
However, greedy graphs
are not necessarily $t$-spanners for any constant $t$.

The most desirable goal would be to construct sparse geometric graphs together with a local routing algorithm to find paths with bounded  stretch factor that always get closer to the destination.
This is the topic of our paper. 
There are two aspects to the goal: to construct sparse geometric graphs in which such paths exist, and to find the paths via a local routing algorithm.   


Recently, Dehkordi et al.~\cite{D-Frati-G} introduced a class of graphs with good path properties: 
A graph is \emph{angle-monotone} if there is a path between every two vertices that, after some rotation, is $x$- and $y$-monotone---equivalently, there is some $90^\circ$ wedge such that the vector of every edge of the path lies in this wedge.  
This class was explored (and named) by Bonichon et al.~\cite{angle-mono}.
Any angle-monotone path $\sigma$ from $s$ to $t$  has the \emph{self-approaching} property
(see~\cite{self-approach}) that a point moving along $\sigma$ always gets closer to $t$. 

The concept can be generalized to wedges of angles other than $90^\circ$---a path is \emph{angle-monotone of width $\gamma$} (``generalized angle-monotone'') if there is some wedge of angle $\gamma$ such that the vector of every edge of the path lies in this wedge.  Although graphs that are angle monotone of width greater than $90^\circ$ are not necessarily self-approaching, they have good spanning properties. A graph that is angle-monotone of width $\gamma < 180^\circ$ is a $(1/\cos\frac{\gamma}{2})$-spanner~\cite{angle-mono}, thus a $\sqrt 2$ spanner for $\gamma=90^\circ$ (the factor $\sqrt 2$ is obvious based on the path being $x$- and $y$-monotone after some rotation). 

Our specific goal in this paper is to construct sparse generalized angle-monotone graphs and design local routing algorithms to find generalized angle-monotone paths in them. 
%
There have been a few results on constructing angle-monotone graphs, but no previous results on local routing to find angle-monotone paths---except for some impossibility results.


\paragraph{Constructing Angle-Monotone Graphs} The best result on constructing planar angle-monotone graphs is due to Dehkordi et al.~\cite{D-Frati-G} who proved that any set of $n$ points has a planar angle-monotone graph of width $90^\circ$ using $O(n)$ Steiner points. They proved this by showing that a
Gabriel triangulation is angle-monotone of width $90^\circ$ (see~\cite{Lubiw-O'Rourke} for a simpler proof), and then 
using the result that any point set can be augmented with $O(n)$ Steiner points to obtain a point set whose Delaunay triangulation is Gabriel. 
Without Steiner points, it is known that one cannot guarantee planar angle-monotone graphs for all point sets~\cite{angle-mono}. 
For the special case of $n$ points in convex position, 
Dehkordi et al.~\cite{D-Frati-G} proved that there exists a (non-planar) angle-monotone graph with $O(n\log n)$ edges. In this paper we show that any point set has an angle-monotone graph with a subquadratic number of edges.

Turning to angle-monotone graphs of larger width, Bonichon et al.~\cite{angle-mono} showed that the half-$\Theta_6$-graph  
on a set of $n$ points, which is planar, is an angle-monotone graph of width $120^\circ$.  

\paragraph{Local Routing on Angle-Monotone Graphs}
 A \emph{$k$-local routing algorithm} finds a path one vertex at a time using only local information about the current vertex and its $k$-neighborhood 
 plus the coordinates of the destination.
The \emph{routing ratio} of a local routing algorithm is the maximum stretch factor of any path found by the algorithm. 
%
%
The results mentioned in the previous two paragraphs imply that Gabriel graphs are $\sqrt 2$-spanners, and half-$\Theta_6$-graphs are $2$-spanners (as was previously known~\cite{bonichonTD,Chew86}). Are there local routing algorithms to find paths with good 
 stretch factors,
or paths that are angle-monotone in these classes of graphs? 
The answers are ``yes'' and ``no'', respectively.
Bonichon et al.~\cite{angle-mono} gave a 1-local routing algorithm for Gabriel graphs that 
has routing ratio $(1 + \sqrt 2)$.  
On the other hand, they proved that no local routing algorithm can find angle-monotone paths in Gabriel graphs.
Bose et al.~\cite{BoseTheta6journal} gave a 1-local routing algorithm for half-$\Theta_6$-graphs that 
has routing ratio $2.887$.
They proved that this is the best ratio possible for any local routing algorithm, which implies that no local routing algorithm will find angle-monotone paths of width $120^\circ$ in half-$\Theta_6$-graphs. 
We construct a family of graphs together with a local routing algorithm that finds generalized angle-monotone paths.
 

\paragraph{Contributions} Our main results are as follows: 

{\bf 1.} Given $n$ points in the plane we construct an angle-monotone graph of width $90^\circ$ with  $O(\frac{n^2\log\log n}{\log n})$ edges---a subquadratic number of edges. Since angle-monotone graphs are \emph{increasing-chord graphs}, this answers Open Problem 4 from~\cite{D-Frati-G}. 
 (We refer to~\cite{self-approach,BahooDMM17,MastakasS15,NollenburgPR16,Rote}
for results on self-approaching and increasing-chord graphs.) 

{\bf 2.} Given $n$ points in the plane and any $\alpha$, $0 < \alpha <  45^\circ$, we construct an angle-monotone graph of width $90^\circ + \alpha$ with $O(\frac{n}{\alpha})$ edges.    We give a 2-local routing algorithm for these graphs that finds angle-monotone paths of width $90^\circ + \alpha$, 
 thus of stretch factor $1/\cos(\frac{90^\circ + \alpha}{2})$. 
In particular, for $\alpha = 30^\circ$ our construction yields the [full] $\Theta_6$-graph, and our local routing algorithm finds angle-monotone paths of width $120^\circ$ and stretch factor $2$. 
 For this case, our algorithm is very similar to the one of Bose et al.~\cite{BoseTheta6journal} that finds paths of stretch factor 2 in half-$\Theta_6$-graphs, 
but our proof of correctness is simpler.

\remove{
\begin{compactenum}
\item Given $n$ points in the plane we construct an angle-monotone graph of width $90^\circ$ with a subquadratic number of edges. Since angle-monotone graphs are \emph{increasing-chord graphs}, this answers Open Problem 4 from~\cite{D-Frati-G}. \marginnote{missing the defn of `increasing chord'.  I put in italics.  I'm ok with no definition.}

\item Given $n$ points in the plane and any $\alpha$, $0 < \alpha <  45^\circ$, we construct an angle-monotone graph of width $90^\circ + \alpha$ with $O(\frac{n}{\alpha})$ edges.    We give a local routing algorithm for these graphs that finds angle-monotone paths of width $90^\circ + \alpha$, 
\newnote{thus of stretch factor $1/\cos(\frac{90^\circ + \alpha}{2})$.} 
In particular, for $\alpha = 30^\circ$ our construction yields the [full] $\Theta_6$-graph, and our local routing algorithm finds angle-monotone paths of width $120^\circ$ \newnote{and stretch factor $2$.} 
Achieving routing ratio $2$ was known~\cite{BoseTheta6journal} but finding angle-monotone paths is new.  Furthermore, our routing algorithm and the proof of its correctness are simpler than those of~\cite{BoseTheta6journal}.
\end{compactenum}
}

%
%

\remove{
\section{Preliminaries}

We introduced various properties of paths and of graphs above, e.g.~greedy, self-approaching, angle-monotone.  
In general, we use the terminology that a graph is defined to be of type X if for every two vertices $s$ and $t$ there is a path of type X from $s$ to $t$.  This pattern can be used to fill in the definitions that were not made formal above.  


 
A routing algorithm is \emph{local} if it uses only local information available at the current node to take the message forwarding decision. Depending on the local information needed, we call a local routing \emph{source oblivious} if it does not need to know the  coordinates of the message source, or \emph{$0$-bit} if the algorithm does not encode any information in the message header except for the target address, or {$t$-local} if the local information contains the coordinates of the neighbors that are within $t$ hop distance from the current vertex. The \emph{competitive ratio} of a path $P$, taken by the algorithm, is the sum of the edge lengths of $P$ over the length of the shortest source-to-destination path.
\marginnote{Do we really need any of this?}
}

\section{Angle-Monotone Graphs of Width $90^\circ$}
\label{sec:subq}

In this section we show that any set of $n$ points admits an angle-monotone graph of width $90^\circ$ with $o(n^2)$ edges.

To achieve this, we will use the   Erd\H{o}s-Szekeres  theorem~\cite{Erdos-Szekeres} to 
partition the point set into
 subsets each with a logarithmic number of points in convex position.  We will then construct an angle-monotone graph on each pair of subsets.
 Our construction is inspired by and builds upon a result in~\cite{D-Frati-G} that every `one-sided convex point set' admits an 
increasing-chord graph with a linear number of edges.  In fact, their proof yields an angle-monotone graph of width $90^\circ$ as we explain in the following section. 

\subsection{Angle-Monotone Graphs on Convex Point Sets}
\label{sec:deh}
Dehkordi et al.~\cite{D-Frati-G}  showed that every convex point set of $n$ points admits an  angle-monotone graph with $O(n \log n)$ edges.  
\changed{Actually, 
they only state that there is an increasing-chord graph.} 
Here we explain why their proof gives the stronger result we need. 
 
A point set $P$ is called \emph{one-sided} with respect to some directed straight line $\vec{d}$, which is not orthogonal to any line through two points of $P$, if the order of the projections of the points on $\vec{d}$ corresponds to the order the points   on the convex-hull of $P$. Figure~\ref{fig:deh}(a) illustrates a one-sided point set. Given a one-sided point set $P$ with respect to the positive $x$-axis,  Dehkordi et al.~\cite{D-Frati-G} showed how to construct a spanning increasing-chord graph $G$ on $P$ with $O(n)$ edges such that any pair of vertices in $G$ are connected by an $xy$-monotone path. Since $xy$-monotone paths are angle monotone~\cite{AlamdariCGLP12}, $G$ is an angle-monotone graph. 

Dehkordi et al.~\cite{D-Frati-G} used the technique for one-sided point sets recursively to construct increasing-chord graphs on arbitrary convex point sets. They showed that any convex point set $P$ can be partitioned into four one-sided point sets $P_1,P_2,P_3,P_4$, e.g., see Figure~\ref{fig:deh}(b), such that the following properties hold: (a) The points in each $P_i$, where $1\le i\le 4$, appear consecutively on the convex hull of $P$. (b) The partition is balanced, i.e., $\max\{|P_1|+|P_3|, |P_2|+|P_4|\} \le n/2+1$. (c)  The point sets $(P_1\cup P_2)$, $(P_2\cup P_3)$, $(P_3\cup P_4)$, and $(P_4\cup P_1)$ are one-sided.

\begin{figure}[h]
\centering
\includegraphics[width=.85\textwidth]{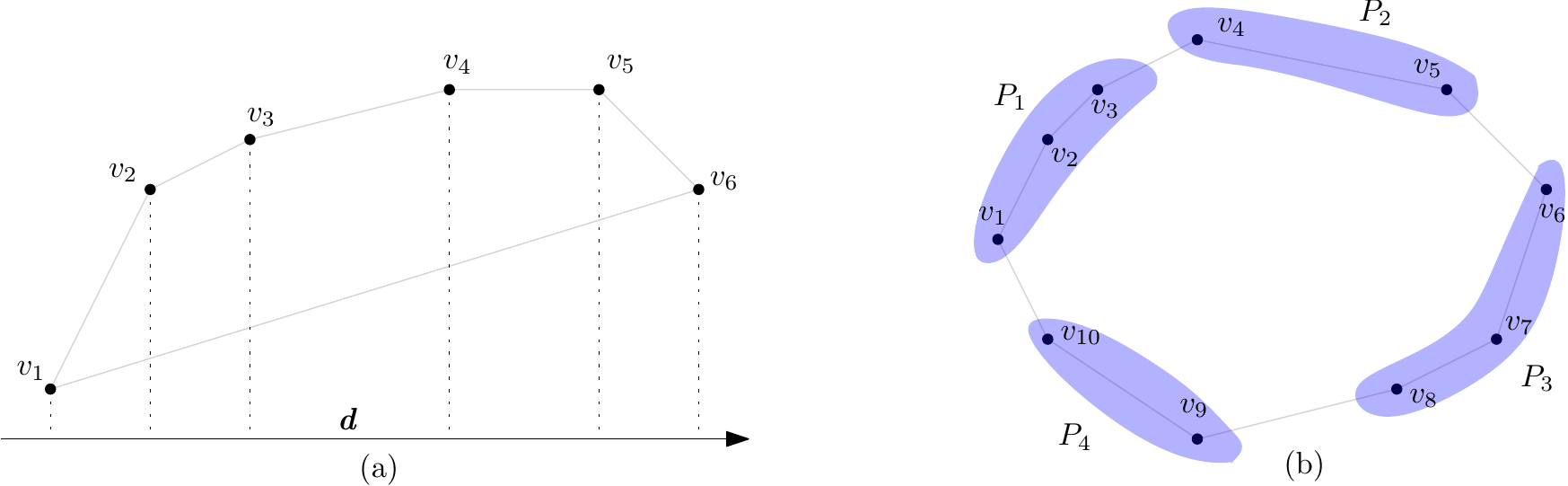}
\caption{(a) A one-sided convex point set. (b) Illustration for the construction of angle-monotone graphs.}
\label{fig:deh}
\end{figure}

Consequently, one can first  construct linear size angle-monotone graphs for one-sided point sets $(P_1\cup P_2)$, $(P_2\cup P_3)$, $(P_3\cup P_4)$, $(P_4\cup P_1)$, and then recursively construct angle-monotone graphs for the convex point sets $(P_1\cup P_3)$ and  $(P_4\cup P_1)$. The union of all these graphs contains angle monotone paths for every pair of vertices. Since $\max\{|P_1|+|P_3|, |P_2|+|P_4|\} \le n/2+1$, the size of the final angle-monotone graph is $f(n) \le 2\cdot f(\frac{n}{2}+1) + O(n) \in O(n \log n)$.

\subsection{Angle-Monotone Graphs on Arbitrary Point Sets}


We first introduce some  preliminary definitions and notation. 
 We will distinguish two types of $x$-monotone paths: an \emph{$(x,y)$-monotone path} increases in both $x$- and $y$-coordinates, and an \emph{$(x,-y)$-monotone path} increases in $x$-coordinate and decreases in $y$-coordinate. For each type of path we further distinguish convex and concave subtypes.  Traversed in increasing $x$ order, a convex path turns to the right, and a concave path turns to the left. Thus \emph{an $(x,y)$-convex path} is an $(x,y)$-monotone path that turns to the right when traversed in increasing $x$ order, and etc.~for the other three types. See Figures~\ref{fig:surprizing}(a)--(d).

  
\begin{figure}[htb]
\centering
\includegraphics[width=.8\textwidth]{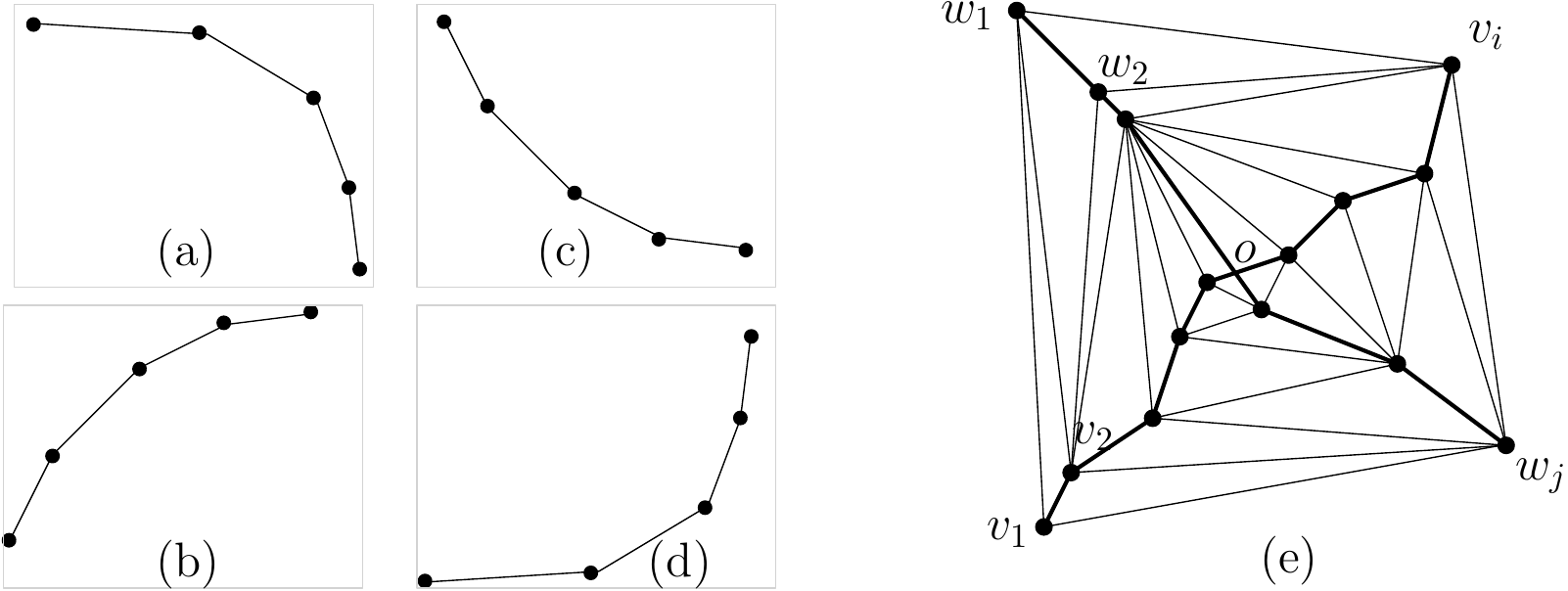}
\caption{(a) An $(x,-y)$-convex path. 
(b) An $(x,y)$-convex path. 
(c) An $(x,-y)$-concave path. 
(d) An $(x,y)$-concave path. (e) Illustration for Lemma~\ref{lem:surprizing}.}
\label{fig:surprizing}
\end{figure}

\begin{lemma}
\label{lem:surprizing} 
 Let $P = (v_1,\ldots,v_i)$ be an $(x,-y)$-monotone path, and let $P' = (w_1,\ldots,w_j)$ be an $(x,y)$-monotone path. Then there exists an angle-monotone graph  of width $90^\circ$ and size $O(i+j)$ that spans 
 $P$ and $P'$. 
\end{lemma}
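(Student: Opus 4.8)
The plan is to prove two things: that every pair of vertices of $P\cup P'$ is joined by an angle-monotone path of width $90^\circ$, and that the whole construction uses only $O(i+j)$ edges. I keep all edges of $P$ and $P'$ in the graph $G$. If both vertices lie on $P$, the subpath of $P$ between them is already $(x,-y)$-monotone, hence angle-monotone of width $90^\circ$ with wedge equal to the down-right quadrant; symmetrically, any two vertices of $P'$ are joined by a subpath of $P'$, which is $(x,y)$-monotone. So the real work is to connect a vertex $v_a\in P$ to a vertex $w_b\in P'$, and this is where the extra edges go.

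I classify each such pair by the position of $w_b$ relative to $v_a$ into one of four quadrant directions---up-right, down-right, up-left, down-left---and handle each direction with a single wedge. Consider first the up-right case, $x(w_b)\ge x(v_a)$ and $y(w_b)\ge y(v_a)$, where I want every edge in the first quadrant. The key observation is that, since $P'$ is $(x,y)$-monotone, both $x(w_b)$ and $y(w_b)$ increase with $b$, so the set of indices $b$ for which $w_b$ lies up-right of $v_a$ is \emph{upward closed}: it equals $\{f(a),f(a)+1,\ldots,j\}$ for some first index $f(a)$. I therefore add the single edge $v_a w_{f(a)}$ (which points up-right) and route $v_a \to w_{f(a)} \to w_{f(a)+1} \to \cdots \to w_b$ along $P'$. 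Every edge of this walk lies in the first quadrant, so the path is angle-monotone of width $90^\circ$, and this one edge per $v_a$ serves all up-right targets on $P'$ simultaneously.

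The remaining three directions use the same device, with the walking chain chosen to match the wedge and, where convenient, the path built in reverse (reversing a width-$90^\circ$ angle-monotone path negates every edge vector and yields another width-$90^\circ$ path). For a down-left target I walk backwards along $P'$ and add one edge per $v_a\in P$; for a down-right target I walk forwards along $P$ and add one edge per $w_b\in P'$; for an up-left target I walk backwards along $P$ and add one edge per $w_b\in P'$. In each case the relevant index set is a monotone (upward- or downward-closed) interval by exactly the same argument, so a single connector edge per vertex covers all targets in that direction. Altogether each vertex of $P$ receives at most two connector edges and each vertex of $P'$ at most two, giving at most $2i+2j$ extra edges on top of the $(i-1)+(j-1)$ edges of the two paths, for a total of $O(i+j)$.

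The crux---and the step I expect to be the main obstacle---is this monotone-interval property: the naive approach of linking each pair directly would use $\Theta(ij)$ edges, and it is precisely the simultaneous monotonicity of the $x$- and $y$-coordinates along each chain that collapses, for every vertex and every one of the four wedge directions, the set of reachable targets into a single contiguous prefix or suffix of the other chain. Once that is established, the only remaining care is with boundary cases: degenerate positions (two points sharing an $x$- or a $y$-coordinate) can be assigned to a fixed quadrant by using non-strict inequalities consistently, and if a given direction has no targets for some vertex then the corresponding connector edge is simply omitted.
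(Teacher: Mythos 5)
Your proof is correct, and it reaches the same $O(i+j)$ bound by the same underlying mechanism as the paper (a constant number of ``connector'' edges per vertex, with the connecting path being a subpath of one chain followed by a single crossing edge), but it is organized around a genuinely different decomposition. The paper first assumes (WLOG, by appending points at infinity) that $P$ and $P'$ intersect at a point $o$, splits the problem into four half-plane subproblems around $o$, and in each one runs a vertical sweep that joins every newly encountered vertex to its \emph{predecessor} (rightmost earlier vertex) on the other path; within a half-plane the quadrant relation between any cross pair is forced, so a single routing rule (walk along $P$ to the predecessor of the target, then take one edge) suffices. You instead skip the intersection step entirely and classify the \emph{pairs} $(v_a,w_b)$ by the four closed quadrant directions, observing that by the simultaneous $x$- and $y$-monotonicity of each chain the set of targets in a fixed direction is a contiguous prefix or suffix of the other chain, so one connector edge per vertex per direction (at most $2i+2j$ in total) covers everything. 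What each buys: your version is self-contained and uniform --- no points at infinity, no case analysis on whether the paths intersect, and an explicit constant in the edge count --- while the paper's half-plane reduction makes each subproblem one-directional, so only a single monotone type of path needs to be verified and the sweep description is very short. Your handling of degeneracies (consistent non-strict inequalities, omitting a connector when a direction is empty) is the right fix for the boundary cases, so I see no gap.
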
 
\begin{proof}
Assume without loss of generality that $P$ and $P'$ intersect, say at point $o$.  (If necessary, we can add points $(-\infty, \infty)$ and $(\infty, -\infty)$ at the start and end of $P$ respectively, and similarly for $P'$.)
We will solve four subproblems for the points to the left of $o$, to the right of $o$, above $o$ and below $o$, as 
illustrated in Figure~\ref{fig:surprizing}(e).  Observe that any two points in $P \cup P'$ either lie in the same path, or in one of these half-spaces, so it suffices to find an angle-monotone graph of size $O(i+j)$ for each subproblem, and take the union. 

 
 Let $v_1,\ldots,v_{i'}$ and $w_1,\ldots,w_{j'}$ be the vertices
 to the left of the vertical line through $o$.
 We now construct an angle-monotone graph 
 spanning these vertices as follows.  Add an edge  $(v_1,w_1)$ and then move a vertical sweep-line $\ell$ from $(-\infty,0)$ to $o$. 
 Each time we encounter a new vertex $q$, we add the edges 
 $(q,v')$ and $(q,w')$, where $v'$ (resp., $w'$) is the rightmost vertex of $P$ (resp.,  $P'$) lying in the left-half plane of $\ell$. We call $v'$ and $w'$ the \emph{predecessor} of $q$ in $P$ and in $P'$, respectively.
 The resulting graph $H$ has size $O(i+j)$.  
 We now show  that $H$ is an angle-monotone graph.
 For any pair of vertices $a,b$,
 if  $a,b$ belong to the same path, i.e., $P$ or $P'$, then they
 are already connected by an angle-monotone path. 
 Otherwise, assume without loss of generality that 
 $a\in P$, $b\in P'$, and 
 $b$ has a 
 larger $x$-coordinate than $a$.
 Let $b'$ be the predecessor of $b$ in $P$.  Follow the path $P$ from $a$ to $b'$ and then take the edge $(b',b)$.  This is an 
$(x,-y)$-monotone path, and thus angle-monotone (equivalently, of width $90^\circ$). 
%
\end{proof}

\begin{lemma}
\label{lem:convex}
 Let $P = (v_1,\ldots,v_i)$ be an $(x,-y)$-convex  path, and let $R$ be the region (above $P$) bounded by $P$ and the leftward and downward rays starting at $v_1$ and $v_i$, respectively. Then for any set 
 $W$ of $j$ points in $R$,
  there exists a graph $G$ 
 of size $O(i+j)$ such that any pair of vertices $v\in P, w\in W $ is connected by an angle-monotone path of width $90^\circ$.
\end{lemma}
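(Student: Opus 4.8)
The plan is to reduce everything to two \emph{free} monotone directions that the path $P$ already supplies, spending only $O(1)$ extra edges per point of $W$. Since $P$ is $(x,-y)$-convex, all of its edge vectors lie in the single $90^\circ$ wedge of directions from horizontal-right to vertical-down; hence walking $P$ \emph{forward} from any vertex is an $(x,-y)$-monotone (width-$90^\circ$) walk, while walking it \emph{backward} is an up-left width-$90^\circ$ walk. Consequently, once a point $w$ is joined to some $v_p$ by an edge whose direction also lies in the down-right wedge, the walk $w\to v_p\to v_{p+1}\to\cdots$ reaches the entire suffix of $P$ by an \AMP of width $90^\circ$; symmetrically, an up-left connector $(w,v_q)$ lets $w$ reach a whole prefix by the backward walk. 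So first I would place all $i-1$ edges of $P$ into $G$ (cost $O(i)$), making both traversals available, and reduce the lemma to producing, for every $w$, a constant number of connector edges that cover all of $P$.

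Next I would dispose of the two \emph{corner} families of points. If $w$ lies up-left of $v_1$, the single edge $(w,v_1)$ is a down-right connector, so $w\to v_1\to\cdots\to v_i$ reaches every vertex; symmetrically, if $w$ lies down-right of $v_i$, the single edge $(w,v_i)$ with the backward walk reaches all of $P$. For a point $w$ lying directly above the arc, I would drop the vertical line through $w$ to locate the arc edge $(v_p,v_{p+1})$ beneath it and add $(w,v_p)$ and $(w,v_{p+1})$: the latter is a down-right connector reaching the suffix $v_{p+1},\dots,v_i$, and, when $w$ is close to the arc, the former is an up-left connector reaching the prefix $v_1,\dots,v_p$ via the backward walk. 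Choosing these $v_p$'s by a single left-to-right sweep, joining each new point to the rightmost arc vertex seen so far exactly as in the proof of Lemma~\ref{lem:surprizing}, keeps the connector count at $O(i+j)$.

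The hard part will be the points sitting \emph{high} above the arc, for which the straddling vertex $v_p$ is down-left of $w$ rather than up-left: then the backward walk points up-left while the connector points down-left, the two cannot share a $90^\circ$ wedge, and a prefix of $P$ is left uncovered. I would attack this by a balanced divide-and-conquer on $P$ in the spirit of Dehkordi et al.~\cite{D-Frati-G}. Split $P$ at its median vertex $v_m$ into the $(x,-y)$-convex sub-paths $P_L=(v_1,\dots,v_m)$ and $P_R=(v_m,\dots,v_i)$. Any $w$ up-left of $v_m$ reaches all of $P_R$ through the down-right connector $(w,v_m)$ and the forward walk, and any $w$ down-right of $v_m$ reaches all of $P_L$ through $(w,v_m)$ and the backward walk; the remaining same-side pairs are produced by recursing on $(P_L,W_L)$ and $(P_R,W_R)$. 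The only points not served across the cut are those up-right of $v_m$ (the troublesome high points) and those down-left of $v_m$, and for these I would rely on convexity to push them onto the covered side of a median one level deeper. The main technical obstacle is precisely this bookkeeping: to prove that the recursion eventually hands every $w$ a connector whose direction matches the forward or backward walk toward each target, and to charge the connectors so that the balanced split yields $O(i+j)$ edges rather than an extra logarithmic factor. I expect to control the count by assigning each $w$ to the unique median vertex at which it first becomes an up-left (or down-right) connector, and amortizing it against the contiguous block of $P$ that it thereby reaches.
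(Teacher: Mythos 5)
Your proposal correctly handles the easy pairs (the path edges of $P$, the corner points, and points just above an arc edge), and you have correctly isolated the hard case: a point $w$ lying high above the arc, whose connector to the straddling vertex points down-left and therefore cannot share a $90^\circ$ wedge with the backward (up-left) walk along $P$. But the resolution you sketch does not close this gap. In the divide-and-conquer, a point $w$ that is up-right of the median $v_m$ gets no usable connector at $v_m$ (the edge $(w,v_m)$ points down-left from $w$, which fits neither the forward nor the backward walk from $v_m$), and if $w$ then recurses only with $P_R$, the pairs $(v_t,w)$ with $v_t\in P_L$ are never connected at any level; recursing $w$ on both sides would restore coverage only at the cost of the edge bound, and you yourself note that the amortization needed to avoid an extra $\log$ factor is not worked out. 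So as written the argument has both a correctness hole (cross pairs between a high point and the half it does not recurse with) and an unproved counting claim.

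The missing idea, and the one the paper uses, is to stop measuring ``above/below'' against the axis-aligned wedges and instead partition $R$ by rays adapted to the local direction of $P$: at each $v_q$ shoot the ray $\ell_q$ perpendicular to the incoming edge $(v_{q-1},v_q)$ into $R$. By convexity these rays cut $R$ into regions $R_0,\dots,R_i$, and a point $w\in R_q$ is joined by the single edge $(w,v_q)$. The payoff is exactly the case you found hard: even when $w$ lies above $v_q$, every edge of the path $v_t,\dots,v_q,w$ (for any $t\le q$) lies in the rotated $90^\circ$ wedge between the direction of $(v_{q-1},v_q)$ and $\ell_q$, so one edge per point covers the \emph{entire prefix} of $P$ up to $v_q$, not just a suffix. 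Doing the symmetric construction from the other end (rays perpendicular to the outgoing edges, regions $R'_0,\dots,R'_i$) covers suffixes, and convexity gives $R_q\cap R'_{q-1}\neq\emptyset$, so for every $v_t$ the forward regions $R_t\cup\dots\cup R_i$ and the backward regions $R'_{t-1}\cup\dots\cup R'_0$ together cover all of $R$. This yields at most $i+2j$ edges directly, with no recursion and no charging scheme. Your sweep-based construction in the style of Lemma~\ref{lem:surprizing} cannot be repaired without some such rotation of the reference wedge, because with axis-aligned wedges the high points genuinely require paths that are not $(x,\pm y)$-monotone.
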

\begin{proof}
 Let $v_0$ be any point on the leftward ray starting at $v_1$. For each $q$ from $1$ to $i$, let $\ell_q$ be the 
  ray starting at $v_q$ that lies perpendicular to $v_{q-1}v_q$ and enters region $R$.
  Since $P$ is convex, the rays  $\ell_q$  subdivide the region $R$ into  regions $R_0,R_1,\ldots,R_i$, e.g., see Figure~\ref{fig:convex}(a). For each point $v_q$, connect $v_q$  to all the points in region $(R_{q}\cap W)$, e.g., see Figure~\ref{fig:convex}(b). Let $G'$ be the resulting graph including the edges of $P$. 
 We now claim that for any vertex $v_t$, $1 \le t \le q$ and for any $w\in (R_{q}\cap W)$ the path $v_t, \ldots, v_q, w$ is an angle-monotone path. 
  If the $y$-coordinate of $w$ is smaller than that of $v_q$, then 
 this path is $(x,-y)$-monotone and hence angle-monotone,
 e.g., see Figure~\ref{fig:convex}(b).
 Otherwise,  one can observe that all edges in the path have vectors that lie in the $90^\circ$ clockwise wedge between $\ell_q$ and the line extending $(v_{q-1},v_q)$, e.g., see Figure~\ref{fig:convex}(c).   Thus the path $v_t, \ldots, v_q, w$ is an angle-monotone path. 
%

\begin{figure}[pt]
\centering
\includegraphics[width=\textwidth]{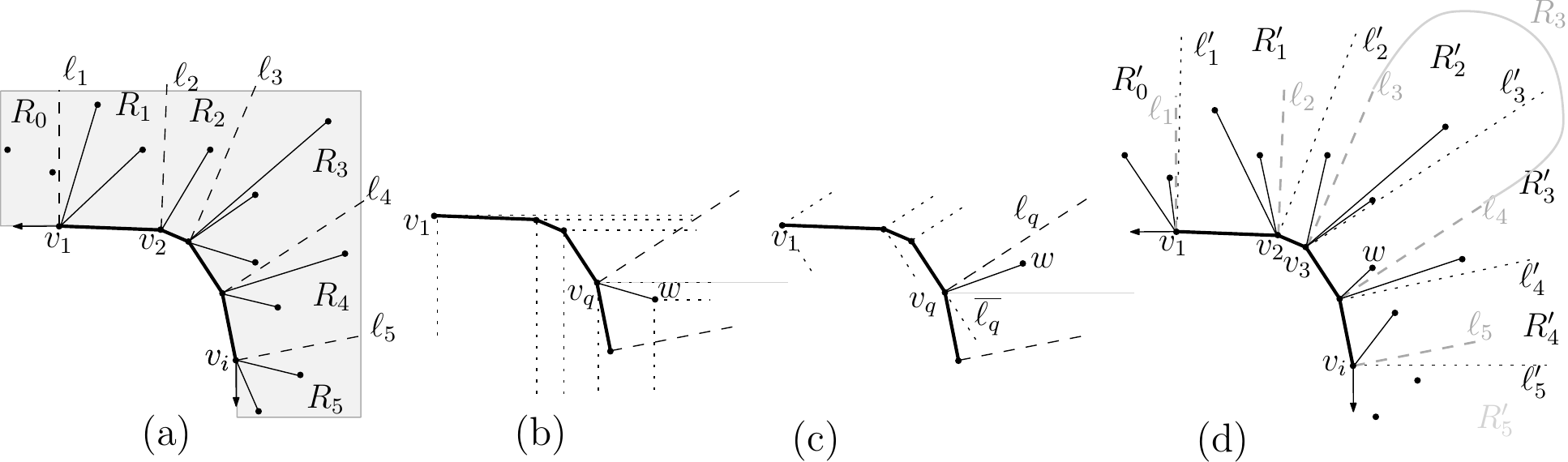}
\caption{(a)--(d) Illustration for Lemma~\ref{lem:convex}. 
}
\label{fig:convex}
\end{figure}

For each $q$ from $i$ to $1$, we construct a graph $G''$ symmetrically by  defining the perpendicular rays  $\ell'_1,\ldots,\ell'_i$ and  regions $R'_0,\ldots,R'_{i}$, as illustrated in Figure~\ref{fig:convex}(d).   We construct the  final graph $G$ by taking the union of all the edges of $G'$ and $G''$. It is straightforward to observe that $G$ has at most $(i+2j)$ edges. 

To complete the proof, we must show that there is an angle-monotone path from any vertex $v_t$, $1 \le t \le i$ to any $w \in W$. Observe that $R_q$ and $R'_{q-1}$ intersect because $P$ is convex. 
If $w \in (R_t \cup \cdots \cup R_i)$, then there is an angle-monotone path from $v_t$ to $w$ in $G$, and otherwise $w \in (R'_{t-1} \cup \cdots  \cup R'_0)$ and there is an angle-monotone path from $v_t$ to $w$ in $G''$.
%
%
\end{proof}

\begin{lemma}
\label{lem:br}
 Let $P = (v_1,\ldots,v_i)$ and $P' = (w_1,\ldots,w_j)$ be a
 pair of $(x,-y)$-convex (or, concave) paths. Then there exists an angle-monotone graph  (spanning  $P$ and $P'$) with width $90^\circ$ and size $O(i+j)$.
\end{lemma}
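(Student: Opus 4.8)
The plan is to reduce everything to Lemma~\ref{lem:convex} together with a single monotone ``envelope'' path. I will treat the case that $P$ and $P'$ are both $(x,-y)$-convex; the concave case is symmetric, replacing the lower envelope below by the upper envelope and ``above'' by ``below''. First I would extend each path conceptually by attaching a horizontal leftward ray at its first vertex and a vertical downward ray at its last vertex, so that, viewed as graphs of functions $f$ (for $P$) and $g$ (for $P'$), both are defined on all of $\mathbb{R}$; this keeps each path $(x,-y)$-convex, since the appended slopes $0$ and $-\infty$ are respectively the largest and smallest possible and hence preserve concavity. Let $L$ be the pointwise minimum $\min(f,g)$. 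Since a minimum of concave functions is concave and a minimum of decreasing functions is decreasing, $L$ is again an $(x,-y)$-convex path. Its breakpoints are vertices of $P$, vertices of $P'$, and crossing points of $P$ and $P'$; because $f-g$ is piecewise linear with at most $(i-1)+(j-1)$ pieces and each piece vanishes at most once, there are $O(i+j)$ crossings, so $L$ has $O(i+j)$ complexity.

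Then I would build the graph $G$ as the union of four edge sets: (i) the edges of $P$ and of $P'$; (ii) the graph produced by Lemma~\ref{lem:convex} with base path $P$ and point set $W = P' \cap R$, where $R$ is the region above $P$; (iii) the symmetric graph from Lemma~\ref{lem:convex} with base $P'$ and $W' = P \cap R'$, where $R'$ is the region above $P'$; and (iv) for each crossing of $L$, a single \emph{bridge} edge joining the two vertices of $L$ that are consecutive across that crossing (the last real vertex before it and the first real vertex after it). Each of (ii) and (iii) has $O(i+j)$ edges by Lemma~\ref{lem:convex}, and (iv) adds $O(i+j)$ edges, so $G$ has size $O(i+j)$.

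To verify correctness it suffices to connect any cross pair $v \in P$, $w \in P'$, since same-path pairs already have $(x,-y)$-monotone subpaths. I would argue by a trichotomy. If $w \in R$, then $w \in W$ and Lemma~\ref{lem:convex} with base $P$ gives an angle-monotone path of width $90^\circ$. If instead $w \notin R$ but $v \in R'$, then $v \in W'$ and Lemma~\ref{lem:convex} with base $P'$ gives one. The remaining case is $w \notin R$ and $v \notin R'$. Here $w$ lies on or below $P$ and $v$ lies on or below $P'$, so $g(x_w) \le f(x_w)$ and $f(x_v) \le g(x_v)$; hence both $v$ and $w$ lie on the lower envelope $L$. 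The subpath of $L$ between them is $(x,-y)$-monotone because $L$ is decreasing, and it is realized in $G$ by edges of $P$ and $P'$ together with the bridge edges of (iv) at the crossings it passes through; since every edge of this walk points down and to the right, the walk is an $(x,-y)$-monotone, hence angle-monotone, path of width $90^\circ$.

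The main obstacle I expect is making the trichotomy and the envelope argument watertight: precisely matching ``$w \notin R$'' with ``$w$ lies on the lower envelope'' across the whole plane (which the ray extension is meant to handle, including points outside the other path's $x$-range), and checking that the two vertices of $L$ consecutive across each crossing are indeed joined by a single down-and-right bridge, so that the entire $L$-walk stays $(x,-y)$-monotone. Bounding the number of crossings by $O(i+j)$, so that (iv) does not inflate the edge count, is the other point that genuinely uses the convexity of the two chains.
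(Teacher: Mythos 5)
Your construction is correct in its essentials, but it takes a genuinely different route from the paper. The paper's proof splits the plane at the intersection points $o_1,\ldots,o_t$ of $P$ and $P'$ into vertical strips, observes that within each strip one chain lies above the other, applies the disjoint case (i.e.\ Lemma~\ref{lem:convex}) independently in each strip, and takes the union with $P$ and $P'$, charging $\sum_k |A_k| = O(i+j)$ edges; cross-strip pairs are dismissed as ``straightforward to verify.'' You instead build one global graph: two applications of Lemma~\ref{lem:convex} (each chain as base, with the other chain's vertices lying above it as the point set), plus the lower envelope of the two chains with one bridge edge per crossing. Your trichotomy ($w$ above $P$; else $v$ above $P'$; else both vertices lie on the lower envelope and are joined by an $(x,-y)$-monotone envelope walk) is exhaustive, the envelope walk is indeed realized by chain edges and bridges and is down-and-right throughout (bridges go from a vertex at height at least the crossing to one at height at most it), and the $O(i+j)$ bound on crossings keeps the size linear. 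What your approach buys is an explicit treatment of exactly those pairs the paper waves off---pairs in different strips, or with neither point above the other chain---at the cost of extra machinery (envelope complexity, crossing count, bridge edges); the paper's strip decomposition is shorter and reuses Case~1 as a black box with a trivial edge count. Two small technical points you should clean up: the downward vertical rays make $f$ and $g$ non-functions (either argue with decreasing curves directly or extend by a sufficiently steep finite slope before taking $\min(f,g)$), and degenerate tangencies or a vertex of one chain lying exactly on the other need the usual general-position caveat so that ``consecutive real vertices of $L$ across a crossing'' is well defined; neither affects the substance, and the concave case is symmetric for you exactly as it is for the paper.
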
 
\begin{proof} We prove the lemma assuming that $P$ and $P'$  are a pair of convex paths.  The case when they are concave is symmetric. We consider two cases depending on whether $P$ and $P'$ intersect or not.
 
\textbf{Case 1:} First consider the case when  $P$ and $P'$ do not intersect, and 
 assume without loss of generality that $P'$ lies above $P$. Since the vertices on $P'$ are already connected by an angle-monotone path, 
 we can apply Lemma~\ref{lem:convex} to obtain the required angle-monotone graph.

\textbf{Case 2:} Consider now the case when $P$ and $P'$ intersect. Let $o_1,\ldots, o_t$ be the points  of intersections   ordered from left to right, e.g., see Figure~\ref{fig:convex2}(a).
 Let $A_1$ (resp., $A_{t+1}$) be the set of  vertices of  $(P\cup P')$ with $x$-coordinates smaller (resp., larger) than that of $o_1$ (resp., $o_t$). For every $q$, where $2\le q\le t$, let $A_q$ be the set of vertices of   $(P\cup P')$  that lie to the left of $o_q$ and to the right of $o_{q-1}$. 

We process the sets $A_1,\ldots, A_{t+1}$ independently using Case 1, and let $G_{A_1},$ $ \ldots, G_{A_{t+1}}$ be the resulting graphs. Compute the final graph $G$ by taking the union of $P,P'$, and $G_{A_1}, \ldots, G_{A_{t+1}}$. It is straightforward to verify that every pair of vertices in $G$ is connected by an angle-monotone path. 
The number of edges in $G$ is at most $\sum_{k=1}^q |A_k|   \in O(i+j)$.
\end{proof}

\begin{figure}[pt]
\centering
\includegraphics[width=.7\textwidth]{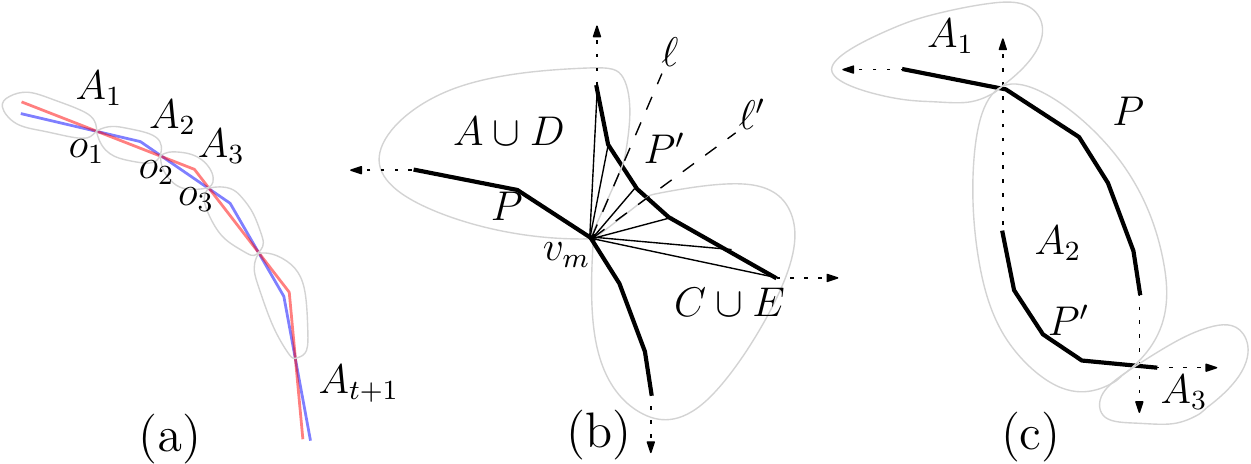}
\caption{  (a)--(c) Illustration for Lemmas~\ref{lem:br}--\ref{lem:br2}. }
\label{fig:convex2}
\end{figure}


\begin{lemma}
\label{lem:br2}
 Let $P = (v_1,\ldots,v_i)$ be an $(x,-y)$-convex path, and let 
   $P' = (w_1,\ldots,w_j)$ be an $(x,-y)$-concave path (or, vice versa).  Then there exists an angle-monotone graph  (spanning  $P$ and $P'$) of width $90^\circ$ and size $O(k\log k )$, where $k = \max\{i,j\}$.  
\end{lemma}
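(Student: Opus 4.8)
The plan is to run a balanced divide-and-conquer on the $x$-coordinate, so that the $\log$ factor arises exactly as it does in the convex-position construction of Section~\ref{sec:deh}. The engine of the proof is a convexity dichotomy. Over any $x$-interval on which the concave chain $P'$ lies \emph{above} the convex chain $P$, the region trapped between them is the intersection of the epigraph of the convex chain with the hypograph of the concave chain, hence a convex region; consequently the vertices of $P$ and $P'$ on that interval are in convex position. For such a sub-instance I would simply invoke the construction of Section~\ref{sec:deh} to obtain a spanning angle-monotone graph of width $90^\circ$ with $O(m\log m)$ edges on the $m$ points involved. The complementary configuration, where the convex chain $P$ lies \emph{above} the concave chain $P'$, does not place the points in convex position (a middle vertex of $P$ can fall strictly inside the hull), and this is the genuinely hard case that forces the recursion.

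Concretely, I would sweep a vertical line $\ell$ through the median $x$-coordinate of $P\cup P'$, splitting $P$ into a convex prefix $P_L$ and suffix $P_R$, and $P'$ into a concave prefix $P'_L$ and suffix $P'_R$, with each side containing at most $k+O(1)$ points. Recursing on the left sub-instance $(P_L,P'_L)$ and the right sub-instance $(P_R,P'_R)$ already handles every pair of vertices on the same side of $\ell$, and every pair lying on a common path is angle-monotone for free. This gives a recurrence $f(k)\le 2f(k/2+O(1)) + (\text{merge cost})$ of the same shape as $f(n)\le 2f(n/2+1)+O(n)$ in Section~\ref{sec:deh}, so a merge cost of $O(k)$ yields $f(k)\in O(k\log k)$, matching the target $k=\max\{i,j\}$.

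The merge must connect each ``left'' vertex to each ``right'' vertex of the \emph{opposite} chain, i.e.\ pairs $(a,b)$ with $a\in P_L,\ b\in P'_R$ and pairs with $a\in P'_L,\ b\in P_R$; in both cases $x(a)<x(b)$. I would split these into two kinds. When the target can be reached by descending across $\ell$ (for instance a convex-left vertex reaching a concave-right vertex that lies below it), the path that follows $P$ rightward to $\ell$, takes one added connecting edge across $\ell$ whose vector points into the fourth quadrant, and then follows $P'$ rightward, is $(x,-y)$-monotone and hence angle-monotone of width $90^\circ$. Only $O(k)$ such connecting edges are needed, generated exactly in the style of Lemma~\ref{lem:surprizing} and Lemma~\ref{lem:convex}.

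The crux, and the step I expect to be the main obstacle, is the opposite kind of cross pair: a vertex of the concave chain on the left that must reach a vertex of the convex chain on the right, so the net direction is up-and-to-the-right while every edge of $P$ and of $P'$ points down-and-to-the-right. No single $90^\circ$ wedge contains both an up-right connecting edge and the down-right chain edges, so such a pair cannot be served by concatenating sub-paths of $P$ and $P'$; this is precisely the convex-above-concave configuration flagged above. I would charge these pairs to the convex-position sub-instances supplied by the dichotomy (invoking Section~\ref{sec:deh}), feeding any residual convex-above-concave interaction that survives the split back into the recursion. Verifying that this bookkeeping contributes only $O(k)$ edges per merge, and that the charged convex-position calls do not inflate the recurrence beyond $f(k)\le 2f(k/2)+O(k)$, is the delicate part, and is exactly what distinguishes this lemma from the purely linear Lemmas~\ref{lem:surprizing}--\ref{lem:br}.
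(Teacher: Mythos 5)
Your plan has a genuine gap, and it starts with the convexity dichotomy on which everything else rests: under the paper's definitions it is inverted. An $(x,-y)$-convex path turns to the right as $x$ increases, so its edge slopes are strictly decreasing; it is the graph of a decreasing \emph{concave} function, while the $(x,-y)$-concave path $P'$ is the graph of a decreasing \emph{convex} function. Hence on an $x$-interval where $P'$ lies above $P$, the trapped region is bounded below by a concave graph and above by a convex graph; this region is \emph{not} convex and the vertices on it are in general not in convex position, so you cannot hand this configuration to the construction of Section~\ref{sec:deh}. (It is, however, exactly the configuration that Lemma~\ref{lem:convex} already handles with $O(k)$ edges.) Conversely, the configuration you dismiss as ``the genuinely hard case that forces the recursion''---$P$ above $P'$---is precisely the one where the two chains bound a convex lens and all their vertices are in convex position. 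So the sub-instances you would send to Section~\ref{sec:deh} are the wrong ones, and the sub-instances you would recurse on need no recursion at all.

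The second problem is that the step you yourself flag as the crux---serving the cross pairs whose net direction is up-and-to-the-right, and verifying that the charging keeps the merge cost at $O(k)$---is exactly the substance of the lemma and is left unproven, so the proposal is a plan rather than a proof. The missing structural fact, which also makes the median-based divide-and-conquer unnecessary, is that a concave-function graph and a convex-function graph cross at most twice. After extending $P$ and $P'$ by rays, the instance therefore splits globally into at most three $x$-intervals: two outer ones where $P'$ lies above $P$, each handled by Lemma~\ref{lem:convex} with $O(k)$ edges, and one middle lens whose vertices (from both chains) are in convex position and are handled by the $O(k\log k)$ construction of Section~\ref{sec:deh}. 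One can check that every cross pair with the destination above and to the right of the source has both endpoints in that middle lens (outside the lens, the later vertex of the opposite chain is always lower), so these troublesome pairs are absorbed by the convex-position construction and no recurrence or bookkeeping scheme is needed.
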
 
\begin{proof}
 We extend $P$ by adding leftward and downward rays starting at $v_1$ and $v_i$, respectively, e.g., see Figure~\ref{fig:convex2}(b). 
 We extend $P'$ symmetrically. We now consider two cases depending on whether $P,P'$ intersect or not. 

\textbf{Case A:} If $P'$ and $P$ do not intersect,  e.g., see Figure~\ref{fig:convex}(f), then $P'$ lies above $P$. In this scenario we can find  an angle-monotone graph  of 
size $O(k)$ 
 by applying Lemma~\ref{lem:convex}.

\textbf{Case B:} If $P$ and $P'$  intersect, then they 
 intersect in at most two points $o_1, o_2$, with $o_1$ to the left of $o_2$, e.g., see Figure~\ref{fig:convex2}(c). 
 The part to the left of $o_1$ and the part to the right of $o_2$ can be handled using Case A.  In the middle we have a convex polygon, where the result of Dehkordi et al.~\cite{D-Frati-G} gives  an \AMG of size $O(k \log k)$ (see Section~\ref{sec:deh}). 
\end{proof}

\begin{theorem}
Let $S$ be a point set with $n$ points. Then there exists an angle-monotone graph   (spanning $S$) of width $90^\circ$ and size $O(\frac{n^2 \log \log n}{\log n})$ edges. 
\end{theorem}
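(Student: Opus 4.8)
The plan is to combine repeated applications of the Erd\H{o}s--Szekeres theorem with the gluing lemmas proved above. First I would partition $S$ into $m = O(n/\log n)$ subsets $S_1,\dots,S_m$, each in convex position and of size $O(\log n)$. This follows by iterating the Erd\H{o}s--Szekeres theorem~\cite{Erdos-Szekeres}: as long as at least $\sqrt{n}$ points remain, the theorem produces a convex-position subset of size $\Omega(\log\sqrt n)=\Omega(\log n)$, which I extract; this yields $O(n/\log n)$ subsets, and the fewer than $\sqrt n$ remaining points are handled by the same process recursively, contributing only $o(n/\log n)$ further subsets, each still of size $O(\log n)$.

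Next, I would observe that the four extreme points (leftmost, rightmost, topmost, bottommost) split the convex-hull boundary of each $S_a$ into at most four arcs, and that each arc is simultaneously $x$- and $y$-monotone, hence is one of the four path types defined above, namely an $(x,\pm y)$-convex or an $(x,\pm y)$-concave path. For every pair of distinct subsets $S_a,S_b$ I decompose each into its $O(1)$ monotone chains and, for each of the $O(1)$ pairs of chains (one from $S_a$, one from $S_b$), build an angle-monotone graph spanning the two chains using the matching gluing lemma: Lemma~\ref{lem:surprizing} when one chain is $(x,-y)$-monotone and the other $(x,y)$-monotone; Lemma~\ref{lem:br} when the two chains share their monotonicity type and their convexity; and Lemma~\ref{lem:br2} when they share a monotonicity type but differ in convexity. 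The latter two lemmas are stated for $(x,-y)$ chains, but their $(x,y)$ analogues follow by reflecting in the $x$-axis. Since every vertex of $S_a$ lies on its hull and therefore in some chain, taking the union over all chain pairs connects every cross-subset pair of vertices by an angle-monotone path; to connect two vertices inside the same $S_a$, I add the $O(\log n\log\log n)$-edge angle-monotone graph of Dehkordi et al.\ (Section~\ref{sec:deh}), contributing $O(n\log\log n)$ edges in total.

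For the edge count, each chain pair costs $O(i+j)=O(\log n)$ except in the case handled by Lemma~\ref{lem:br2}, which costs $O(k\log k)=O(\log n\log\log n)$ with $k=\max\{i,j\}=O(\log n)$. As each of the $O(m^2)$ subset pairs spawns only $O(1)$ chain pairs, the total number of edges is
\[
O\!\left(m^2\log n\log\log n\right)=O\!\left(\frac{n^2}{\log^2 n}\cdot \log n\log\log n\right)=O\!\left(\frac{n^2\log\log n}{\log n}\right),
\]
and the intra-subset graphs together with the $o(n/\log n)$ small subsets contribute only lower-order terms.

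I expect the main obstacle to be the second paragraph: checking that the four-way chain decomposition reduces every subset pair to exactly the configurations covered by Lemmas~\ref{lem:surprizing}--\ref{lem:br2} (and their reflections), and verifying that the union of the individual chain-pair graphs really furnishes an angle-monotone path for each cross-subset vertex pair, rather than merely reconstructing each chain in isolation. The Erd\H{o}s--Szekeres partition and the final counting are then routine.
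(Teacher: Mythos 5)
Your proposal is correct and follows essentially the same route as the paper: an Erd\H{o}s--Szekeres/Urabe-style partition into $O(n/\log n)$ convex subsets of size $O(\log n)$, decomposition of each into $O(1)$ monotone convex/concave chains, pairwise application of Lemmas~\ref{lem:surprizing}, \ref{lem:br} and \ref{lem:br2}, and the same edge count dominated by the $O(\log n\log\log n)$ cost of the Lemma~\ref{lem:br2} pairs. The only cosmetic difference is that you handle same-subset pairs with a separate Dehkordi et al.\ graph per subset (a lower-order term), whereas the paper simply applies the chain-pair lemmas to all pairs of chains, including two chains of the same polygon.
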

\begin{proof}
By the Erd\H{o}s-Szekeres     theorem~\cite{Erdos-Szekeres}, every point set  with $n$ points contains  a
 subset of $O(\log n)$ points in convex position.  Urabe~\cite{Urabe96} observed that by repeatedly extracting such 
 a convex set, one can partition a point set into $O(\frac{n}{\log n})$ convex  polygons each of size $O(\log n)$. 
 We partition each of these convex polygons into  
 an $(x,y)$-convex path, an $(x,-y)$-convex path, an $(x,y)$-concave path,
  and an $(-x,-y)$-concave path.

For each pair of these 
 paths, we apply Lemmas~\ref{lem:surprizing}--\ref{lem:br2}, as appropriate. Finally, we compute the required graph $G$ by taking the union of all the $O(\frac{n^2}{\log ^2 n})$ graphs. Since  any pair of points in $S$  either  lie on the same  path,  or in one of these $O(\frac{n^2}{\log ^2 n})$ graphs, they are connected by an angle-monotone path of width $90^\circ$. Since the length of each path is at most $O(\log n)$, the size of $G$ is $O(\frac{n^2}{\log ^2 n}) \cdot O(\log n \log \log n) = $  $O(\frac{n^2 \log \log n}{\log n})$. 
\end{proof}


\begin{cor}
Let $S$ be a point set with $t$ nested convex hulls. Then there exists an angle-monotone graph (spanning $S$) of width $90^\circ$ with $O(t^2  n \log n)$ edges.
\end{cor}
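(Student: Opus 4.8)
The plan is to imitate the proof of the preceding theorem, replacing the Erd\H{o}s--Szekeres partition by the onion-peeling decomposition that the hypothesis already hands us. Since $S$ has $t$ nested convex hulls, I would peel them off to obtain $t$ convex layers $L_1,\ldots,L_t$, where $L_r$ has $n_r$ points and $\sum_{r=1}^{t} n_r = n$. Each layer $L_r$ is a convex polygon, so---exactly as in the theorem---I would split its boundary at its leftmost, rightmost, topmost and bottommost vertices into four monotone pieces: an $(x,y)$-convex path, an $(x,-y)$-convex path, an $(x,y)$-concave path, and an $(x,-y)$-concave path. Carrying this out for every layer produces a collection of $4t = O(t)$ monotone paths whose total number of vertices is $O(n)$.

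Next I would connect the paths pairwise. For each unordered pair of paths in the collection I apply whichever of Lemmas~\ref{lem:surprizing}--\ref{lem:br2} matches their two monotonicity types (reflecting the plane first, if necessary, to align orientations): Lemma~\ref{lem:surprizing} when one path is of $(x,\pm y)$-type and the other of the opposite $x$-type; Lemma~\ref{lem:br} when both are convex, or both concave, of the same $(x,\pm y)$-type; and Lemma~\ref{lem:br2} when one is convex and the other concave of the same type. With four types there are only ten type-pairs, and a short check confirms that each is covered by one of these three lemmas. The final graph $G$ is the union of the edge sets of all these pairwise graphs together with the edges of the paths themselves. Correctness is then immediate: any two points $p,q\in S$ lie on paths $Q_a,Q_b$ of the collection; if $a=b$ they are already joined by an angle-monotone path of width $90^\circ$ inside that monotone path, and if $a\neq b$ the pairwise graph built for $\{Q_a,Q_b\}$ supplies one. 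Note that $Q_a$ and $Q_b$ may come from different (nested) layers and therefore cross one another, but Lemmas~\ref{lem:br} and~\ref{lem:br2} each already treat both the intersecting and the non-intersecting cases, so the nesting causes no difficulty.

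For the edge count I would use the crude per-pair bound. Each path has at most $n$ vertices, so the most expensive lemma, Lemma~\ref{lem:br2}, contributes $O(n\log n)$ edges to any single pair, while Lemmas~\ref{lem:surprizing} and~\ref{lem:br} contribute only $O(n)$. There are $\binom{4t}{2}+4t = O(t^2)$ pairs, whence $|E(G)| = O(t^2)\cdot O(n\log n) = O(t^2 n\log n)$, as claimed. (A finer accounting that uses $\sum_a |Q_a| = O(n)$ in place of the pairwise bound $|Q_a|\le n$ in fact yields the stronger $O(t n\log n)$, but the stated bound already follows from the union bound above.)

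I expect the only genuine obstacle to be the bookkeeping in the second step: confirming that a reflection or rotation of the plane really does reduce each of the ten type-pairs to the hypotheses of one of the three lemmas, and that the choice of reflections is consistent across pairs, so that no pair of paths is left unhandled. Everything else---the decomposition, the union, and the summation---is routine once that case analysis is in place.
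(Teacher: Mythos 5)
Your proposal is correct and is exactly the argument the paper intends (the corollary is stated without proof as the evident adaptation of the preceding theorem): replace the Erd\H{o}s--Szekeres partition by the $t$ convex layers, split each layer into four monotone paths, and apply Lemmas~\ref{lem:surprizing}--\ref{lem:br2} to each of the $O(t^2)$ pairs, giving $O(t^2 n\log n)$ edges. Your side remark that charging each path to its $O(t)$ partners actually yields $O(t\,n\log n)$ is also valid and slightly sharpens the stated bound.
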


\subsection{Further Observations}
Although the \changed{above} construction of a subquadratic-size angle-monotone network with width $90^\circ$ is \changed{somewhat} involved, 
one can easily construct an \AMG with width $(90^\circ+\alpha)$  and $O(\frac{n^{3/2}}{\alpha})$ edges, for any $0<\alpha \le 90^\circ$, as 
\changed{we show in this section.  In 
the following Section~\ref{sec:small-size} we give a different construction to obtain a graph with $O(\frac{n}{\alpha})$ edges.}

Let $S$ be a set of $n$ points in $\mathbb{R}^2$.  
  To construct an angle-monotone graph, we first 
  mark a set $R$ of $\sqrt{n}$ points from $S$, and for each pair 
  of points $(a,b)$, where $a\in S$ and $b\in S$, we construct 
  an angle-monotone path of width $90^\circ+\alpha$ between $a$ and $b$.
  We then apply this process recursively on $S\setminus R$. 
  We now describe the construction in details. Assume initially
  all the points of $S$ are unmarked.

  Mark a set $R$ of $\sqrt{n}$ points from the unmarked points of $S$.
  Let the set of unmarked points be $R'$. 
  Construct a clique $K_{|R|}$ spanning the points of $S$.
  For each point $q\in R'$, create $t = 360^\circ/(2\alpha)$ uniform
  wedges of angle $2\alpha$ around $q$. Figure~\ref{fig:n3by2}(a)
  illustrates an example, where the points of $R$ are shown in black.
  For each wedge $W$,   let $W(R)$ be the points of $R$ that lie inside $W$. Add an edge between $q$ and the bisector nearest neighbor 
  of $q$ in $W(R)$. Let the resulting graph be $H$. 
  Note that $H$ has $O(nt)\in O(n/\alpha)$ edges.

\begin{figure}[h]
\centering
\includegraphics[width=.85\textwidth]{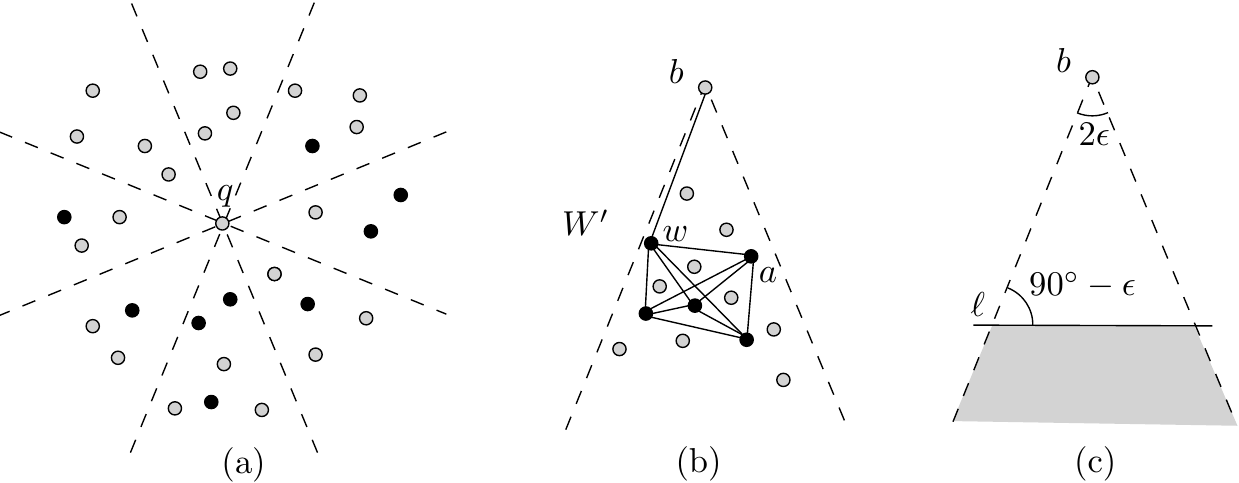}
\caption{Construction of angle-monotone graphs.}
\label{fig:n3by2}
\end{figure}

  We claim that for each pair 
  of points $(a,b)$, where $a\in R$ and $b\in S$, $H$ contains 
  an angle-monotone path of width $90^\circ+\alpha$ between $a$ and $b$. 
  If $a\in R$ and $b\in R$, then the claim is straightforward to verify.
  If $a\in R$ and $b\in R'$, then let $W'$ be the wedge of 
  $b$ that contains $a$, e.g. see Figure~\ref{fig:n3by2}(b).
  Since the points in $R$ form a clique in $H$,
  the points of $W'(R)$ form  a clique inside $W'$. Let $w\in W'(R)$ be 
  bisector nearest neighbor of $b$ in $W'$.
  If $w$ coincides with $a$, then $(a,b)$ must be an edge in $H$.
  We may thus assume that $w\not=a$. In this scenario, 
  the smallest angle determined by  the path
  $a,w,b$ is at least $(90^\circ-\alpha)$. 
  Therefore, $a,w,b$ is an angle-monotone path of width 
  at most $180^\circ-(90^\circ-\alpha) =  (90^\circ+\alpha)$.
  Figure~\ref{fig:n3by2}(c) illustrates such a scenario, where   
  the line $\ell$ passes through $w$ and perpendicular to the 
  bisector of $W'$. The region where $a$ could be located is shown in 
  gray. 
  
 We now apply the above process repeatedly until we mark all the points of $S$.  Since at each step we process $\sqrt{n}$ new points, 
 the number of steps is $O(\sqrt{n})$. Since at each step we 
 create at most $O(n/\alpha)$ edges, the number of total 
 edges is bounded by $O(\frac{n^{3/2}}{\alpha})$. The resulting \AMG  has diameter 2.

In the following section we give a more interesting construction of an \AMG of width $(90^\circ+\alpha)$. 


\section{Angle-Monotone Graphs of Width $(90^\circ+\alpha)$} 
\label{sec:small-size}

In this section we show how to construct, for any point set of size $n$ and any angle $\alpha$, $0 < \alpha < 45^\circ$, an angle-monotone graph of width $(90^\circ+\alpha)$ with $O(\frac{n}{\alpha})$ edges. 
We call these
\emph{layered 3-sweep graphs}.
First, in Section~\ref{sec:basic-graph}, we introduce a \emph{3-sweep graph} of a point set in which three lines are used to connect each point to three of its neighbors. The special case where the three lines form $60^\circ$ wedges yields the half-$\Theta_6$-graph.  In Section~\ref{sec:properties},  we analyze angle-monotonicity properties of 3-sweep graphs. Then, in Section~\ref{sec:full-graph}, we define a \emph{$k$-layer 3-sweep graph} as a union of $k$ different 3-sweep graphs. We prove that a layered 3-sweep graph with an appropriate number of layers is an angle-monotone graph of width $(90^\circ+\alpha)$ with  $O(\frac{n}{\alpha})$ edges. 


\subsection{3-Sweep Graphs}
\label{sec:basic-graph}

\begin{figure}[pt]
\centering
\includegraphics[width=.9\textwidth]{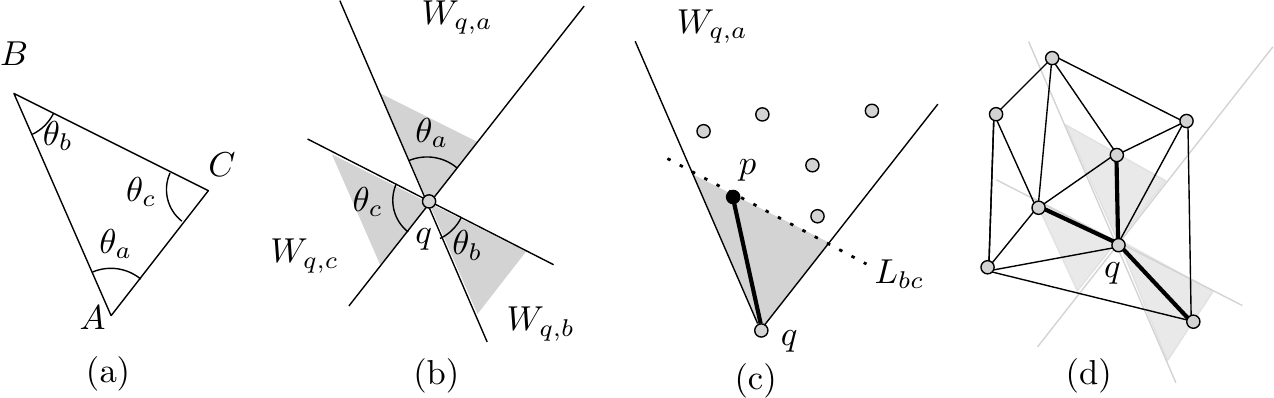}
\caption{(a)  $\Delta ABC$. (b) $W_{q,a}, W_{q,b}, W_{q,c}$. 
 (c)  The $a$-nearest neighbor of $q$, where $(q,p)$ is a $\theta_a$-edge. (d) A 3-sweep graph.}
\label{fig:nnshort}
\end{figure}
 
 Let $\Delta ABC$ be an acute triangle in  $\mathbb{R}^2$ such that 
 $A,B,C$ appear in clockwise order on the perimeter of  $\Delta ABC$, e.g., see Figure~\ref{fig:nnshort}(a).  Let $\theta_a$, $\theta_b$, $\theta_c$ be the angles  at $A,B,C$, respectively.
 For any point $q$ let $W_{q,a}$ (the ``$a$-wedge'' of $q$) be the wedge with apex $q$ such that the two sides  of $W_{q,a}$ are parallel to $AB$ and $AC$, i.e., $\Delta ABC$ can be  translated such that $A$ coincides with $q$ and 
 two sides of $\Delta ABC$ lie along the sides of $W_{q,a}$.
 Similarly, we define the wedges $W_{q,b}$ and $W_{q,c}$, e.g., see Figure~\ref{fig:nnshort}(b). 
 The \emph{$a$-nearest neighbor} of $q$ in $W_{q,a}$ is defined to be the first point $p$ that we hit (after $q$) while sweeping $W_{q,a}$ by a line $L_{bc}$ parallel to $BC$ (starting with the line through $q$).  Figure~\ref{fig:nnshort}(c) illustrates such an example. In the case of ties, we can pick arbitrarily as far as the results in this subsection are concerned.  However, it is important that 
the local routing algorithm in Section~\ref{sec:local-routing} be able to find the $a$-nearest neighbor, so we break ties by choosing the most clockwise point. 
 We call the edge $(q,p)$ a \emph{$\theta_a$-edge}. We define 
 $b$- and $c$-nearest neighbors and $\theta_b$- and $\theta_c$-edges analogously.
 %

Given a set of points $S$, and three acute angles $\theta_a, \theta_b, \theta_c$ summing to $180^\circ$, we define a \emph{3-sweep graph  
 $G$ on $S$ with angles $\{\theta_a,\theta_b,\theta_c\}$}
 to be a geometric graph obtained by connecting every point $q\in S$ to its $a$-, $b$- and $c$-nearest neighbors, e.g., see  Figure~\ref{fig:nnshort}(d).
 If $\theta_a = \theta_b = \theta_c = 60^\circ$, then 
 $G$ is equivalent to the well known half-$\Theta_6$-graph.
 
 Bonichon et al.~\cite{bonichonTD} proved that \changed{half-$\Theta_6$-graphs are equivalent to 
 Triangular Distance (TD) Delaunay triangulations, introduced by Chew~\cite{Chew86}.} 
 A 3-sweep graph is also the same as the  half-$\Theta_6$-graph under a linear transformation. 
 As illustrated in Figure~\ref{fig:tddel}, the linear transformation that maps an equilateral triangle $T$ into $T' (=\Delta ABC)$ transforms point set $S$ into $S'$ so that the half-$\Theta_6$  graph on $S$ maps to the 3-sweep graph on $S'$.


\begin{figure}[h]
\centering
\includegraphics[width=.6\textwidth]{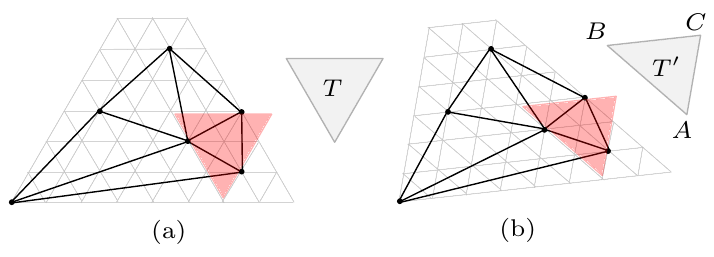}
\caption{(a) A TD Delaunay triangulation (equivalently, a half-$\Theta_6$-graph) of a point set $S$, where the distance function is determined by  the equilateral triangle $T$. (b) A 3-sweep graph on a point set $S'$ determined by a triangle $T'$, where both $S'$ and $T'$ are transformed using the same linear transformation. }
\label{fig:tddel}
\end{figure}

Both half-$\Theta_6$ and 3-sweep graphs  are special cases of \emph{convex Delaunay graphs}, which were studied by Bose et al.~\cite{BoseCCS10}. They proved that every convex Delaunay graph is a $t$-spanner,  but the value of $t$ obtained from that proof is too large to be useful for our triangle $T'$. 
\changed{In particular,} the constant $t$ depends on two parameters   
$\alpha_c, \kappa_c$ of the underlying convex shape.
For half-$\Theta_6$-graphs, the convex shape is an equilateral triangle, e.g., see Figure~\ref{fig:tddel2}, and $t$ is bounded by 58, which is much larger than the known spanning ratio of 2 for half-$\Theta_6$-graphs~\cite{bonichonTD,Chew86}.

\begin{figure}[h]
\centering
\includegraphics[width=.3\textwidth]{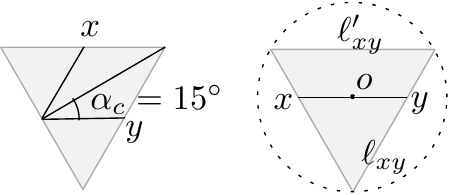} 
\caption{Illustration for $\alpha_c = 15^\circ$ and $\kappa_c = (\ell'_{xy}/|xy|) = (5/2)$ for an equilateral triangle.
}
\label{fig:tddel2}
\end{figure}

Every convex Delaunay graph is planar~\cite{BoseCCS10}, and hence the following lemma is immediate. For interest, here we give a  self-contained proof. 


\begin{lemma}
\label{lem:planar}
Every 3-sweep graph is planar.
\end{lemma}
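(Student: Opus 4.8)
The plan is to exploit the empty-triangle structure that \emph{defines} each edge. For every edge $(q,p)$, say a $\theta_a$-edge where $p$ is the $a$-nearest neighbor of $q$, I would associate the \emph{canonical triangle} $T_{q,p}$: the homothet of $\Delta ABC$ whose $A$-corner is at $q$, whose two sides lie along $W_{q,a}$, and whose base (parallel to $BC$) passes through $p$, as in Figure~\ref{fig:nnshort}(c). Because $p$ is the first point met while sweeping $L_{bc}$ away from $q$, the interior of $T_{q,p}$ contains no point of $S$, and the chord $qp$---running from the corner $q$ to the point $p$ on the opposite side---lies in the interior of $T_{q,p}$ except at its endpoints. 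The same construction yields empty canonical triangles for $\theta_b$- and $\theta_c$-edges, each a \emph{positive} homothet (a translation and scaling, with no rotation) of $\Delta ABC$. Since planarity is preserved under the linear transformation of Figure~\ref{fig:tddel}, the statement also follows from the planarity of the half-$\Theta_6$-graph, but the point here is to give the direct argument.

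Next I would assume, for contradiction, that two edges $e_1=(u_1,v_1)$ and $e_2=(u_2,v_2)$ cross at a point $z$ lying in the relative interior of both segments, and let $T_1,T_2$ be their canonical triangles. Since the open chord $u_iv_i$ lies in $\operatorname{int}T_i$, we get $z\in\operatorname{int}T_1\cap\operatorname{int}T_2$. Emptiness of $T_1$ forces $u_2,v_2\notin\operatorname{int}T_1$, and emptiness of $T_2$ forces $u_1,v_1\notin\operatorname{int}T_2$; in particular, no apex of one triangle can sit inside the other. The crux is then the following helper fact: if two positive homothets of a fixed triangle have interiors meeting at a common point, then one of them contains a vertex of the other (same-orientation homothets cannot interlock in a ``Star-of-David'' fashion). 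I would use this to conclude that one of the four endpoints $u_1,v_1,u_2,v_2$ is driven into the interior of the opposite canonical triangle, contradicting emptiness.

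To make the final step quantitative I would fix coordinates in which the base direction $BC$ is horizontal and the sweep direction vertical, so that each canonical triangle occupies the horizontal slab between its apex and its base line, and the defining property of $v_i$ becomes: $v_i$ is the point of smallest sweep-height in $u_i$'s wedge, so $T_i$ is empty below its base. Comparing the base-line heights of $v_1$ and $v_2$, placing the crossing $z$ in the common slab, and using the fixed slopes of the wedge sides, I would show that the endpoint on the higher base line lies within the horizontal cross-section of the other triangle at its own height---hence strictly inside that empty triangle. I expect the main obstacle to be the \textbf{case analysis}: the two crossing edges may share a type or use two of the three types $\theta_a,\theta_b,\theta_c$, giving several relative orientations of $T_1$ and $T_2$; the symmetry among $a,b,c$ should collapse these to a few representative configurations. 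A secondary technical point is degenerate incidences (endpoints on triangle boundaries), which I would resolve using the clockwise tie-breaking rule in the definition of nearest neighbor, so that a genuine crossing always produces a point \emph{strictly} interior to an empty canonical triangle.
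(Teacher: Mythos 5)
Your overall route---inscribe each edge in an empty positive homothet of $\Delta ABC$ and derive a contradiction from a crossing---is viable in principle; it is the standard convex-Delaunay-graph argument, and the paper itself notes that planarity also follows from Bose et al.~\cite{BoseCCS10}. Your helper fact is even true: each canonical triangle is an intersection of three half-planes with the same outward normals, so $T_1\cap T_2$ is again a positive homothet of $\Delta ABC$, and by pigeonhole two of its sides come from the same $T_i$, whence a corner of $T_i$ lies in the other triangle. The genuine gap is that this buys too little: the corner so produced may be one of the two \emph{non-apex} corners of the canonical triangle, which is not a point of $S$, so the emptiness of the triangles is not violated and no contradiction follows. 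Everything therefore rests on your closing claim that one of the four endpoints $u_1,v_1,u_2,v_2$ is forced strictly inside the opposite triangle, and the argument you sketch for that claim does not go through as described.

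Concretely, your quantitative step fixes one frame in which ``the base direction is horizontal and the sweep direction vertical'' and compares the base-line heights of $v_1$ and $v_2$; this presupposes that both edges are of the same type. When one edge is a $\theta_a$-edge and the other a $\theta_b$-edge, the two base lines are parallel to $BC$ and to $AC$ respectively, so there is no common slab and no height comparison---and this mixed case is exactly where the difficulty lies (it is the only case the paper's proof works through, via the sweep line $L$ through $q'$ and nearest-neighbor distance comparisons); your proposal leaves it to an unspecified case analysis. Even in the same-type case the stated conclusion is off: the triangle with the lower base line has an empty cross-section at the height of the higher base point, so the claim must at least be reversed, and the reversed claim still requires an argument rather than a reading of slopes. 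Finally, the tie situation is not a secondary technicality. For the equilateral case take $u_1=(0,0)$, $v_1=(-5,10)$, $u_2=(-6,2)$, $v_2=(-3,10)$ with upward $a$-wedges bounded by slopes $\pm\sqrt3$: the chords $u_1v_1$ and $u_2v_2$ properly cross, yet all four points lie on the boundaries of the two canonical triangles, so no strict-interior contradiction exists at all, and with an arbitrary choice at the tie both crossing edges genuinely appear in the graph. Hence the clockwise tie-breaking rule must be woven into the main argument, not patched in afterwards; your instinct to invoke it is right, but the proof of the crucial endpoint-inside claim---for all type combinations and with ties handled---is precisely what is missing.
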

\begin{proof}
Note that it suffices to prove the following claim.
\begin{enumerate}
\item[] Let $S$ be a set of points in $\mathbb{R}^2$, and let $q$ and $t$ be two points in $S$. Let $q'$ be a nearest neighbor of $q$ in  $W_{q,a},W_{q,b},$ or $W_{q,c}$. Similarly, let $t'$ be a nearest neighbor of $t$ in $W_{t,a},W_{t,b}$, or $W_{t,c}$. Then the line segments $qq'$ and $tt'$ do not intersect except possibly at their common endpoint, i.e., when $q'=t'$. 
\end{enumerate}
The case when $q'\in W_{q,j}$ and $t'\in W_{t,j}$, for some $j\in \{a,b,c\}$ is straightforward.  We may thus assume without loss of generality that $q'\in W_{q,a}$ and $t'\in W_{t,b}$, e.g., see Figure~\ref{fig:nn}(a).  We now show that the line segments $qq'$ and $tt'$ do not intersect except possibly at their common endpoint, i.e., when $q'=t'$. 

Suppose for a contradiction that  there exist 
 $q,q',t,t'$ such that the segments $qq'$ and $tt'$ properly intersect.
 Let $r$ be the point of intersection. Since both 
 $W_{q,\theta_a}$ and $W_{t,\theta_b}$ contains $r$, 
 either $q\in W_{t,\theta_b}$ or $t\in W_{q,\theta_a}$.
 
 Without loss of generality assume that $t\in W_{q,\theta_a}$.
 Let  $L$  be the straight line that passes through $q'$
 and  makes a clockwise angle of $\theta_b$ with the left side of $W_{q,\theta_a}$,  e.g., see Figure~\ref{fig:nn}(b). Since $q'$ is a nearest neighbor of $q$, the point $t$ must be on or above  $L$ . We now consider two cases depending
 on whether  $t'$ is inside or outside of $W_{q,\theta_a}$.

\begin{figure}[pt]
\centering
\includegraphics[width=.9\textwidth]{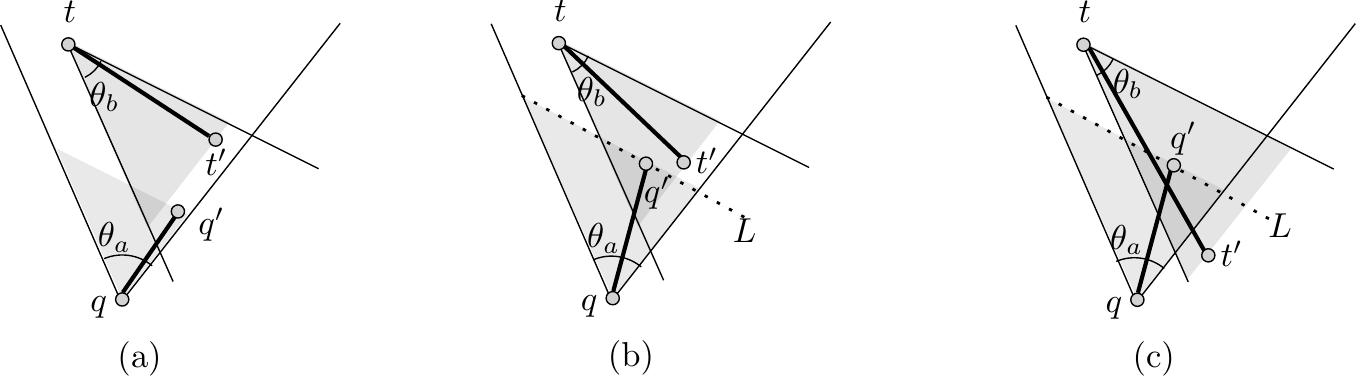}
\caption{Illustration for the proof of Lemma~\ref{lem:planar}.}
\label{fig:nn}
\end{figure}
 
 If $t'\in W_{q,\theta_a}$, then $t'$ must be on or above  $L$ . Consequently, $tt'$ may intersect $qq'$ only if $t,q'$ and $t'$ lie on  $L$  in this order. Since $t'$ is a nearest neighbor of $t$, the point $q'$ cannot have smaller  distance 
 to $t$ than that of $t'$. Hence $q'$ must coincide with $t'$, a contradiction.
   
 If $t'\not\in W_{q,\theta_a}$, then $t'$ must lie to the right of the right side of $W_{q,\theta_a}$, e.g., see Figure~\ref{fig:nn}(c). Since $t$ lies on or above  $L$  and since $qq'$ intersects $tt'$, $q'$ must lie inside $W_{t,\theta_b}$. Consequently, $q'$ must have smaller  
 distance to $t$ than that of $t'$, a contradiction.
\end{proof}

\subsection{Angle-Monotonicity of 3-Sweep Graphs}
\label{sec:properties}
We now analyze angle-monotonicity properties of 3-sweep graphs.
We will show that for points $q$ and $t$ in a 3-sweep graph $G$ with $t$ in $W_{q,a}$ there is an angle-monotone path from $q$ to $t$ whose width depends on $\theta_a$ and on the position of $t$ relative to the \emph{$a$-path} of $q$.  The \emph{$a$-path} of $q$, denoted $P_{q,a}$, is defined to be the maximal path $q(=v_0),\ldots, v_k$ in $G$ such that for each $i$ from $1$ to $k$, $v_i$ is the  $a$-nearest neighbor of $v_{i-1}$. 
We also define the \emph{extended $a$-path} $\overline P_{q,a}$ to be the $a$-path $P_{q,a}$ together with $W_{v_k,a}$, which is empty of points since the $a$-path is maximal.  
We define [extended] $b$- and $c$-paths similarly.

Observe that if $t$ is a vertex of $P_{q,a}$ then there is an angle-monotone path of width $\theta_a$ from $q$ to $t$.  The following lemma handles the case where $t$ $\in W_{q,a}$, and $t$ does not lie on the $a$-path from $q$.  
The proof of the lemma is very similar to the proof in~\cite{angle-mono} that the half-$\Theta_6$-graph is angle-monotone of width $120^\circ$.


\begin{lemma}
\label{lem:angle-bound}
Let $q$ and $t$ be two vertices in $G$ such that $t$ lies in $W_{q,a}$.  If $t$ lies to the left (resp., right) of $\overline P_{q,a}$ then there is an angle-monotone path of width $(\theta_a+\theta_b)$ (resp., $(\theta_a+\theta_c)$) from $q$ to $t$. 
 Furthermore, the path consists of one subpath of the $a$-path of $q$ followed by one subpath of the $b$-path (resp., $c$-path) of $t$. 
\end{lemma}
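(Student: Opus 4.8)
The plan is to fix the case where $t$ lies to the left of $\overline P_{q,a}$ (the right case being symmetric, with $c$ in place of $b$) and to produce the promised path as a prefix of the $a$-path of $q$ concatenated with a reversed suffix of the $b$-path of $t$. The argument splits into two independent parts: a combinatorial part showing that $P_{q,a}$ and $P_{t,b}$ share a vertex, and a metric part bounding the width. The first part is where the work lies; the second is a short wedge computation.

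For the meeting vertex I would use a crossing argument supported by planarity. The extended $a$-path $\overline P_{q,a}$ is the polyline $q,\dots,v_k$ capped by the empty wedge $W_{v_k,a}$; together with the left bounding ray of $W_{q,a}$ it encloses a region $\Lambda$ (``the left region''), and $t\in\Lambda$ by hypothesis. I would then follow the $b$-path $P_{t,b}=(t=u_0,u_1,\dots)$: each forward $b$-edge points into some $W_{\cdot,b}$, and from the placement of $W_a$ and $W_b$ about a common apex one sees that every such edge moves the current point away from the left bounding ray of $W_{q,a}$, so $P_{t,b}$ cannot leave $\Lambda$ across the cone boundary. Since $P_{t,b}$ is maximal and terminates in an empty infinite wedge $W_{u_m,b}$ lying outside $\Lambda$, the polyline must exit $\Lambda$, and the only remaining portion of $\partial\Lambda$ it can meet is $\overline P_{q,a}$ itself. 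A geometric crossing of these two straight-line paths would, by Lemma~\ref{lem:planar} (no two edges of a 3-sweep graph cross except at a shared endpoint), have to occur at a common vertex $w$, so $w\in P_{q,a}\cap P_{t,b}$.

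Given $w$, I would take the prefix $q,\dots,w$ of $P_{q,a}$ followed by the reversal of the suffix $w,\dots,t$ of $P_{t,b}$, which is exactly the ``one subpath of the $a$-path then one subpath of the $b$-path'' structure claimed. For the width I would argue directly in the triangle $\Delta ABC$ (not after any linear transformation, since linear maps distort angles): the three lines through a point parallel to the sides of $\Delta ABC$ cut the plane into six sectors whose angles are $\theta_a,\theta_b,\theta_c$, each appearing twice, and $W_a,W_b,W_c$ are three pairwise non-adjacent ones. The forward $a$-edges have direction vectors in $W_a$, while the reversed $b$-edges have direction vectors in the opposite sector $-W_b$; the key observation is that $W_a$ and $-W_b$ are adjacent sectors, so their union is a single wedge of angle exactly $\theta_a+\theta_b$. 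Hence all edge vectors of the constructed path lie in one wedge of angle $\theta_a+\theta_b$, giving angle-monotonicity of width $\theta_a+\theta_b$; the right case makes $-W_c$ adjacent to $W_a$ and yields width $\theta_a+\theta_c$.

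The main obstacle is the crossing-forcing step: making precise that $P_{t,b}$ cannot escape $\Lambda$ except across $\overline P_{q,a}$. This needs a clean invariant—for instance that moving in a $b$-direction strictly decreases a suitable linear functional measuring signed distance to the left cone-boundary—so that the path stays trapped on the correct side until it meets $\overline P_{q,a}$. As a fallback, if this invariant proves delicate, I would instead induct on the number of points of $S$ inside the canonical triangle of $q$ and $t$ (the homothet of $\Delta ABC$ with corner $A$ at $q$ and opposite side through $t$): peeling off the first $a$-neighbor $v_1$ of $q$ either leaves $t$ in $W_{v_1,a}$ (recurse with a strictly smaller canonical triangle and prepend $qv_1$) or places $v_1$ in the adjacent cone $W_{t,b}$ of $t$ (switch the roles of the two paths, again with a strictly smaller triangle), with the same wedge-adjacency fact supplying the width in both cases. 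This mirrors the half-$\Theta_6$ argument of~\cite{angle-mono}, which is the natural model since a 3-sweep graph is the image of a half-$\Theta_6$-graph under the linear map of Figure~\ref{fig:tddel}; that map preserves all the incidence and sweep structure used in the combinatorial part, though not the angles used in the width part.
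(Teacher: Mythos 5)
Your overall plan (show that $P_{q,a}$ and $P_{t,b}$ share a vertex, then concatenate a prefix of $P_{q,a}$ with a reversed suffix of $P_{t,b}$ and bound the width by observing that $W_a$ and $-W_b$ form a single wedge of angle $\theta_a+\theta_b$) matches the paper's, and your width computation is fine. The gap is in the step you yourself flag as the main obstacle: the trapping argument for the region $\Lambda$ does not close. First, the boundary of $\Lambda$ is not only the left ray of $W_{q,a}$ and the polyline $P_{q,a}$; it also contains the left bounding ray of the empty terminal wedge $W_{v_k,a}$. That ray is not made of graph edges, so Lemma~\ref{lem:planar} says nothing about it, and your direction argument does not block it either: with $W_{q,a}$ oriented upward, $b$-edges point rightward/down-rightward, which is exactly the direction in which one crosses that ray from $\Lambda$ to the outside. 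Ruling out such an escape requires the nearest-neighbor (sweep-order) property, e.g.\ noting that if an edge $(u_i,u_{i+1})$ of $P_{t,b}$ crossed that ray then $v_k$ would lie in $W_{u_i,b}$, so the $b$-neighbor $u_{i+1}$ cannot lie beyond the line through $v_k$ parallel to $AC$, forcing $u_{i+1}$ back onto the left side --- an argument of the same kind as the paper's, which you do not supply. Second, your claim that the terminal empty wedge $W_{u_m,b}$ ``lies outside $\Lambda$'' is asserted, not proved; what is actually needed is that if the $b$-path ended at a vertex $u_m$ still strictly left of $\overline P_{q,a}$, then $W_{u_m,b}$ would contain a vertex of $P_{q,a}$ (or a better candidate would exist for some earlier neighbor choice), contradicting maximality. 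That is precisely the content of the paper's proof, which picks the extremal vertices $t'$ (last vertex of $P_{t,b}$ in $W_{q,a}$ strictly left of $\overline P_{q,a}$) and $q'$ (last vertex of $P_{q,a}$ not above $t'$) and derives a contradiction from the definitions of $a$- and $b$-nearest neighbors. So planarity plus monotonicity of a linear functional is not enough; the sweep-order case analysis is the missing idea.

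Your fallback induction on the points in the canonical triangle (in the spirit of~\cite{angle-mono}) could be made to work, but as written it is also incomplete for this particular statement: the dichotomy ``either $t\in W_{v_1,a}$ or $v_1\in W_{t,b}$'' needs a proof that uses the hypothesis that $t$ lies left of $\overline P_{q,a}$ (and a tie-breaking/degeneracy discussion), and when you ``switch the roles of the two paths'' you must verify that the side condition propagates (that the new target lies on the correct side of the relevant extended path), since otherwise you cannot guarantee the prescribed structure --- one subpath of the $a$-path of $q$ followed by one subpath of the $b$-path of $t$ --- nor the width $\theta_a+\theta_b$ rather than $\theta_a+\theta_c$. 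As it stands, neither route in the proposal establishes the lemma.
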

\begin{proof}
%
 Assume that the side $BC$ of $\Delta ABC$ is parallel to the $x$-axis and $A$ lies below $BC$. Such a condition can be met after a suitable rotation of the Cartesian axes. Without loss of generality assume that  $t$ lies to the left of $\overline P_{q,a}$.  We will show that $P_{q,a}$ and $P_{t,b}$ intersect at some vertex $x$. Our path will then follow $P_{q,a}$ from $q$ to $x$, and then follow $P_{t,b}$ backwards from $x$ to $t$. Observe that this path is angle-monotone of width $(\theta_a+\theta_b)$.

Our proof is by contradiction.  Assume that $P_{q,a}$ and $P_{t,b}$ do not intersect at a vertex. Let $t'$ be the last vertex of $P_{t,b}$ that lies in $W_{q,a}$ and strictly to the left of $\overline P_{q,a}$.
Let $q'$ be the last vertex of $P_{q,a}$ that lies below or at the same height (i.e., $y$-coordinate) as $t'$.

We will derive a contradiction by considering the possible positions for $t'$ and $q'$.
First suppose that $t'$ is in $W_{q',a}$.  See Figure~\ref{fig:tddel-new2}(a).
Then $q'$ must have an $a$-nearest neighbor $q''$, since $t'$ is a candidate to be its $a$-nearest neighbor.  By definition of the $a$-nearest neighbor, $q''$ must be at the same height as $t'$, or lower. Note that $q''$ is on $P_{q,a}$.  This contradicts the choice of $q'$ as the last vertex of $P_{q,a}$ that lies below or at the same height as $t'$.

\begin{figure}[h]
\centering
\includegraphics[width=.4\linewidth]{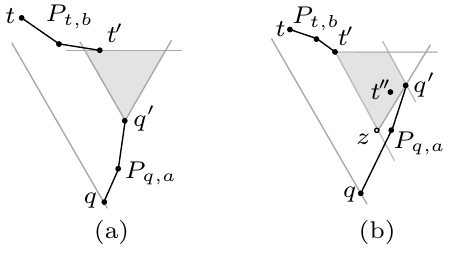}
\caption{ (a) The case of the proof of Lemma~\ref{lem:angle-bound} when $t' \in W_{q',a}$.  (b) The case of the proof of  Lemma~\ref{lem:angle-bound} when $t' \not\in W_{q',a}$. 
}
\label{fig:tddel-new2}
\end{figure}

Next suppose that $t'$ is not in $W_{q',a}$. See Figure~\ref{fig:tddel-new2}(b). Consequently, $q'$ must be in $W_{t',b}$. Then $t'$ must have a $b$-nearest neighbor $t''$, since $q'$ is a candidate to be its $b$-nearest neighbor. By definition of the $b$-nearest neighbor, $t''$ must be to the left of, or on, the line $\ell$ parallel to AC going through $q'$.  Thus $t''$ is in $W_{z,a}$ where $z$ is the point where line $\ell$ intersects the line forming the left side of $W_{t',b}$.
If $t''$ is in $W_{q',a}$ then (as argued above) $q'$ must have an $a$-nearest neighbor that provides a better choice than $q'$. 

Thus $t''$ must lie in the quadrilateral $W_{z,a} - W_{q',a}$ (shaded in Figure~\ref{fig:tddel-new2}(b)).  
Observe that $t''$ lies in $W_{q,a}$ since both $t'$ and $q'$ do. Finally, we consider whether $t''$ lies strictly to the left of $\overline P_{q,a}$.  If $t''$ lies above $q'$ then it lies strictly to the left of $\overline P_{q',a}$ because that portion of the path lies in $W_{q',a}$.  So suppose $t''$ lies below or at the same $y$-coordinate as $q'$.  Note that $t''$ cannot lie on the path $q, \ldots, q'$ since we assumed that paths $P_{q,a}$ and $P_{t,b}$ do not intersect at a vertex. If $t''$ lies strictly to the right of $P_{q,a}$, then path $P_{q,a}$ must contain a point $p$ strictly inside $W_{z,a}$.  But then $q'$ is not in $W_{p,a}$, a contradiction since the $a$-wedge of a point on an $a$-path must contain all later points of the path.  Thus $t''$ lies strictly to the left of $\overline P_{q,a}$.  
This contradicts the choice of $t'$ as the last vertex of $P_{t,b}$ that is in $W_{q,a}$ and strictly to the left of $\overline P_{q,a}$. 
%
\remove{
We claim that \sout{$q'$ is} the $b$-nearest neighbor of $t'$ {\color{cyan} lies on the subpath $q,\ldots,q'$ of $P_{q,a}$,} and thus that the paths intersect at $q'$.

{\color{cyan}
 First assume that $t'$ is in $W_{q',a}$. If $q'$ lies below $t'$, then the $a$-nearest neighbor $q''$ of $q'$ could be $t'$ (which would yield the required path), or some vertex below or at the same height as $t'$. If $q''$ lies below $t'$, then it would contradict the condition that $q'$ is  the last vertex on $P_{q,a}$ below or at the same height as $t'$, e.g., see Figure~\ref{fig:tddel}{c}. If $q''$ has the same height as $t'$, then it must lie to the right of $t'$, otherwise, it would contradict that $t'$ is to the left of $\overline P_{q,a}$. But having $q''$ with the same height at $t'$ and to the right of $t'$ would contradict that $q'$ is the last vertex on $P_{q,a}$ below or at the same height as $t'$.

 Consequently, $q'$ must be in $W_{t',b}$, e.g., see Figures~\ref{fig:tddel}{d}--{f}. If $q'$ is not the $b$-nearest neighbor of $t'$, then we only need to consider the following two cases. 
 
 (Case 1) Assume that the $b$-nearest neighbor of $t'$ was chosen by breaking ties arbitrarily. If the $b$-nearest neighbor $t''$ of $t'$ lies in $W_{q',a}$ but does not coincide with $q'$, then we can extend $q,\ldots,q'$ to $t''$ (contradicting that $q'$ is the last vertex with height at most that of $t'$). On the other hand, if $t''$ lies to the left of $W_{q',a}$, then  
 $t''$ must lie on the subpath $q,\ldots,q'$, e.g., Figures~\ref{fig:tddel}(d)--(e). Otherwise, we could extend  $t,\ldots,t'$ to $t''$ or some other vertex (in case of a tie), contradicting that $t'$ is the last vertex to the left of  $\overline P_{q,a}$.
 
 (Case 2) The $b$-nearest neighbor of $t'$ was chosen uniquely, e.g., Figure~\ref{fig:tddel}(f).  If $q'$ is not the $b$-nearest neighbor of $t'$, then there must be some better choice $t''$, as follows. If $t''$ lies in $W_{q',a}$, then we can extend  $q,\ldots,q'$ to $t''$, which   contradicts that $q'$ is the last vertex with height at most that of $t'$. Otherwise, if $t''$ lies to the left of $W_{q'a}$, then we can extend  $t,\ldots,t'$ to $t''$ or some other vertex (in case of a tie),   contradicting that $t'$ is the last vertex to the left of  $\overline P_{q,a}$.
}
}
\end{proof} 


\subsection{Layered 3-Sweep Graphs}
\label{sec:full-graph}

In this subsection we define an angle-monotone graph of width $(90^\circ+\alpha)$ for any angle $\alpha$, $0 < \alpha < 45^\circ$, such that $k = \frac{180}{\alpha}$ 
is an integer,
and for any set $S$ of $n$ points. Our graph is defined 
as a $k$-layer 3-sweep graph.

Let $\Delta ABC$ be an acute triangle with $A,B,C$ in clockwise order around the triangle, and with angles 
$\theta_a = 2\alpha$, $\theta_b = \theta_c = 90^\circ - \alpha$.  Orient $\Delta ABC$ so that the vertically upward ray starting at $A$ bisects $\theta_a$.  
 Let $G_1$ be the 3-sweep graph of $S$ with respect to the 3 lines through the sides of $\Delta ABC$.  

We define $G_i$, $2 \le i \le k$ by successive rotations of $\Delta ABC$.  Let $\Delta_i ABC$ be the triangle obtained by rotating $\Delta ABC$ clockwise around $A$ with an angle of $\frac{i-1}{k}360^\circ$, and let $G_i$ be the 3-sweep graph of $S$ with respect to the three lines through the sides of $\Delta_i ABC$. The union of $G_1, \ldots , G_k$ is defined to be the \emph{$k$-layer 3-sweep graph} $H_k$ of $S$ with respect to $\alpha$. 

\begin{theorem}
\label{thm:k3s}
Let $H_k$ be a $k$-layer 3-sweep graph, with $k= \frac{180}{\alpha}$.  Then $H_k$ is an angle-monotone graph of width $(90^\circ+ \alpha)$ and the number of edges in $H_k$ is $O(\frac{n}{\alpha})$. 
\end{theorem}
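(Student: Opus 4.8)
The plan is to treat the two claims separately, since they are essentially independent. The edge bound is immediate: each layer $G_i$ is a 3-sweep graph, so by Lemma~\ref{lem:planar} it is planar and hence has $O(n)$ edges. As $H_k$ is the union of $k = \frac{180}{\alpha}$ such layers, it has $O(kn) = O(\frac{n}{\alpha})$ edges, with no further work needed.

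The substance of the theorem is the angle-monotonicity. The key observation I would exploit is the relationship between the rotation step and the apex angle $\theta_a$. The triangles $\Delta_i ABC$ differ by a clockwise rotation of $\frac{360^\circ}{k}$ about $A$, and with $k = \frac{180}{\alpha}$ this step equals $2\alpha$, which is \emph{exactly} $\theta_a$. Consequently, as $i$ ranges over $1,\ldots,k$, the $a$-wedges $W_{\cdot,a}$ of the successive layers are rotated copies of one fixed wedge of angular width $2\alpha$, with bisectors spaced $2\alpha$ apart. Since $k\cdot 2\alpha = 360^\circ$, these wedges tile the full angular range around any point. Hence for any ordered pair $q,t$ of points of $S$, there is at least one layer $G_i$ whose $a$-wedge $W_{q,a}$ contains $t$.

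Fixing such a layer $G_i$, I would then invoke Lemma~\ref{lem:angle-bound}: because $t \in W_{q,a}$, there is an angle-monotone path from $q$ to $t$, entirely contained in $G_i$, of width $(\theta_a+\theta_b)$ if $t$ lies to the left of $\overline P_{q,a}$ and $(\theta_a+\theta_c)$ if to the right. Substituting the chosen angles $\theta_a = 2\alpha$ and $\theta_b = \theta_c = 90^\circ-\alpha$ yields $\theta_a+\theta_b = \theta_a+\theta_c = 90^\circ+\alpha$ in both cases. The degenerate situation in which $t$ is itself a vertex of the $a$-path of $q$ gives an even narrower path of width $\theta_a = 2\alpha < 90^\circ+\alpha$. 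Since $G_i \subseteq H_k$, this path lives in $H_k$, so every pair of vertices is joined by an angle-monotone path of width $(90^\circ+\alpha)$, as required.

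The main obstacle—or rather the single idea on which the whole argument turns—is recognizing that the number of layers must be chosen precisely so that the rotation step matches $\theta_a$, forcing the $a$-wedges to tile the plane; once this alignment is in place, the per-pair path is produced directly by Lemma~\ref{lem:angle-bound}, and the angle arithmetic collapses to the clean value $90^\circ+\alpha$. The only delicate point to address is the handling of directions lying exactly on a wedge boundary, which I would resolve by treating the wedges as closed (so every direction lies in at least one) together with the clockwise tie-breaking rule already fixed in the definition of the $a$-nearest neighbor.
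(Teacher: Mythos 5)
Your proposal is correct and follows essentially the same route as the paper: planarity of each layer (Lemma~\ref{lem:planar}) gives the $O(\frac{n}{\alpha})$ edge bound, and for each pair $q,t$ some layer has $t\in W_{q,a}$ so that Lemma~\ref{lem:angle-bound} yields a path of width $\theta_a+\theta_b=\theta_a+\theta_c=90^\circ+\alpha$. You merely spell out more explicitly than the paper why the rotated $a$-wedges cover all directions (rotation step $=\theta_a=2\alpha$) and how boundary directions and the on-path degenerate case are handled, which is a welcome elaboration rather than a departure.
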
 
\begin{proof}
Let $q$ and $v$ be two points in $S$. Then  $v$ belongs to $W_{q,a}$ in some $G_i$, where ${1\le i\le k}$. By Lemma~\ref{lem:angle-bound}, there exists an angle-monotone path of width $2\alpha+(90^\circ-\alpha)=(90^\circ+\alpha)$ between $q$ and $v$ in  $G_i$, and hence also in $H_k$. By Lemma~\ref{lem:planar}, each $G_i$ is planar. Hence $H_k$ has $O(nk)\in O(\frac{n}{\alpha})$ edges.
\end{proof}

If $2\alpha = 60^\circ$, 
 {then $k=6$.  Because of symmetries, $G_i=G_{i+2}$ so we really only have two 3-sweep graphs, and the resulting graph $H_6$ is the full-$\Theta_6$-graph.}

In the remainder of this section we compare $k$-layer 3-sweep graphs and full-$\Theta_k$ graphs. 
 Figure~\ref{fig:theta8}(a) illustrates the difference. 
 On the one hand, for $k>6$, $H_k$ may have  up to 3 times as many edges as the $\Theta_k$-graph. 
 \changed{However, if $k$ is congruent to 2 mod 4}
 then the sparser graph determined by the union of $G_2,G_4,\ldots,G_k$ has the same properties as $H_k$ \changed{as we now show, using the property that angle-monotonicity is symmetric with respect to the endpoints of the path.} 
\changed{As already noted in the proof of Theorem~\ref{thm:k3s}, for every pair of points $q,v \in S$, $v$ belongs to $W_{q,a}$ in some $G_i$, where ${1\le i\le k}$, and  
then by Lemma~\ref{lem:angle-bound}, there exists an angle-monotone path of width $2\alpha+(90^\circ-\alpha)=(90^\circ+\alpha)$ between $q$ and $v$ in  $G_i$. 
If $i$ is even, this path is included in the union of $G_2,G_4,\ldots,G_k$.  If $i$ is odd, then, because $k$ is even, $q$ belongs to $W_{v,a}$ in $G_{i+(k/2)}$, where the  indices wrap around.  Because $k/2$ is odd, the index $i + (k/2)$ is even, and by Lemma~\ref{lem:angle-bound} there is an angle-monotone path of the required width between $q$ and $v$ in this subgraph. 
}


 On the other hand, every $H_k$ is an angle-monotone graph  of width $(90^\circ+ \alpha)$, but it is not known whether  $\Theta_k$-graphs are angle-monotone with bounded width. For every  $k=4m+4$,  where $m$ is a positive integer, one can construct a $\Theta_k$ graph of width approximately  $(90^\circ+ 2\alpha)$. For example, if $k = 8$, then $2\alpha = 45^\circ$, and $H_8$ is an angle-monotone graph  of width $112.5^\circ$. A $\Theta_8$-graph may have comparatively large  width, e.g., Figure~\ref{fig:theta8}(b) illustrates a $\Theta_8$-graph,  where  any angle-monotone path between $u$ and $v$ has  width  approximately $135^\circ$.

\begin{figure}[htb]
\centering
\includegraphics[width=\textwidth]{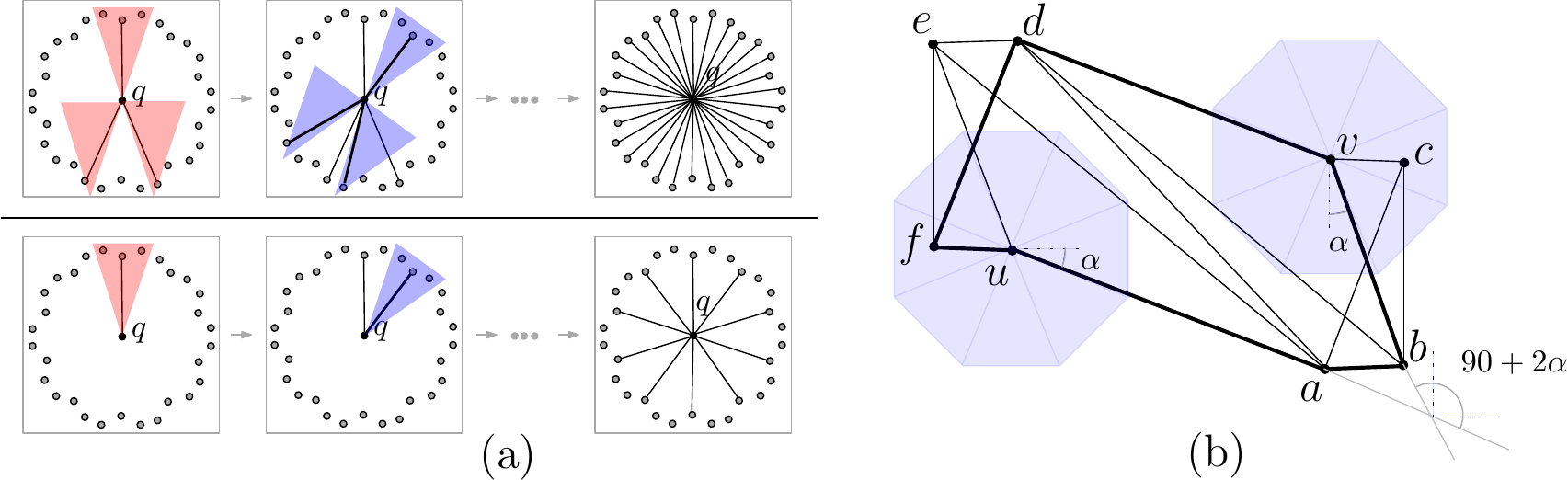}
\caption{(a) Illustration for the neighbors of $q$ in (top) $H_{10}$, and (bottom) $\Theta_{10}$.  (b) An angle-monotone path between $u$ and $v$ of width approximately $(90^\circ+2\alpha) = 135^\circ$ (inspired by an illustration in~\cite{BoseCMRV16}).}
\label{fig:theta8}
\end{figure}

\section{Local Routing in Layered 3-Sweep Graphs}
\label{sec:local-routing}

In this section we give a local routing algorithm for $k$-layer 3-sweep graphs.  Specifically, our routing algorithm is \emph{2-local}, meaning that at each step we assume knowledge of: the coordinates of the current vertex $u$, the coordinates of the target vertex, and the \emph{2-neighborhood} of $u$, which consists of the neighbors of $u$ and their neighbors. In the special case when $k=6$, i.e., for full-$\Theta_6$-graphs, we can restrict ourselves to 1-locality (see Section~\ref{sec:full-theta-routing}).

\begin{theorem}
There is a 2-local routing algorithm
that finds angle-monotone paths of width $90^\circ + \alpha$ in any $k$-layer 3-sweep graph $H_k$, where $\alpha={180^\circ}/{k}$.
The algorithm has routing ratio ${1}/{\cos (45^\circ+\frac{\alpha}{2})}$.
\end{theorem}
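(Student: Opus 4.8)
The plan is to exhibit a 2-local rule whose trajectory from any source $q$ to target $t$ is \emph{exactly} an angle-monotone path of width $90^\circ+\alpha$; the routing ratio is then immediate. Indeed, if every edge vector of the traversed path lies in a wedge of angle $\gamma=90^\circ+\alpha$, then projecting the edges onto the bisector of that wedge shows that the total path length is at most $|qt|/\cos(\gamma/2)=|qt|/\cos(45^\circ+\frac{\alpha}{2})$; this is precisely the spanning bound for angle-monotone paths \cite{angle-mono}. So the whole theorem reduces to designing a 2-local rule that reconstructs, one step at a time, the path guaranteed by Lemma~\ref{lem:angle-bound}.

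The algorithm I propose mirrors that path. Recall that each layer's $a$-wedge has opening $\theta_a=2\alpha=360^\circ/k$, so the $k$ rotated $a$-wedges tile the directions around any point; hence for the current vertex $u\neq t$ there is a \emph{unique} layer $i$ with $t\in W_{u,a}$ in $G_i$, computable from the direction of $\vec{ut}$, and while the walk is in its forward phase this is the working layer. By Lemma~\ref{lem:angle-bound} the required path in $G_i$ consists of a prefix of the $a$-path of $u$ up to the vertex $x$ where it meets the $b$-path (or $c$-path) of $t$, followed by that $b$-path traversed backwards to $t$. Accordingly the rule runs in two phases: in the forward phase it moves to the $a$-nearest neighbor of $u$ in $G_i$; in the backward phase it moves to the $b$- (resp.\ $c$-) \emph{predecessor} of $u$, i.e.\ the neighbor $y$ whose $b$- (resp.\ $c$-) nearest neighbor is $u$ and that lies on the $b$-path of $t$. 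Two-locality is exactly what is needed here: knowing the neighbors of each neighbor $y$ of $u$, one can determine the $a$-, $b$-, and $c$-nearest neighbors of $y$ (each is the closest graph-neighbor of $y$ in the corresponding wedge), which both certifies the forward $a$-step and identifies the backward predecessor.

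The correctness argument then has two parts. First, each single step is computable from the 2-neighborhood of $u$ and the coordinates of $t$, as just described; in the special case $k=6$ the backward $b$-/$c$-steps coincide with forward $a$-steps of another layer (because $\theta_b=\theta_c=90^\circ-\alpha=2\alpha=\theta_a$), so 1-locality suffices and we recover the full-$\Theta_6$ behaviour. Second, the sequence of steps traces exactly the path of Lemma~\ref{lem:angle-bound}, so it terminates at $t$ and is angle-monotone of width $2\alpha+(90^\circ-\alpha)=90^\circ+\alpha$, which yields the claimed ratio.

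The main obstacle is the memorylessness of this second part: the rule must decide, from $u$ and $t$ alone, whether it is still in the forward phase or has reached the turn $x$, and it must recover the working layer $i$ consistently at every step without reference to the source $q$. The delicate point is that $t$ may leave the $a$-wedge $W_{u,a}$ of the correct layer $i$ \emph{before} the geometric meeting vertex $x$ is reached, so the turn cannot simply be declared when $\vec{ut}$ exits the $a$-cone. I would resolve this with a geometric lemma characterising $x$ locally: $x$ is the first vertex of the $a$-path of $u$ possessing a $b$-/$c$-predecessor $y$ that passes a local test, namely $y\in W_{t,b}$ (resp.\ $y\in W_{t,c}$), a necessary condition coming from the wedge-nesting property for $y$ to lie on the $b$-path of $t$; this test is evaluated within the 2-neighborhood. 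To prove that the test fires exactly at $x$ and that iterating the backward steps reaches $t$, I would combine the wedge-nesting of $b$-paths (the $b$-wedges of successive vertices are nested, so every vertex of the $b$-path of $t$ lies in $W_{t,b}$) with the endpoint symmetry of angle-monotonicity noted after Theorem~\ref{thm:k3s}: the same physical path is the Lemma~\ref{lem:angle-bound} path from $q$ to $t$ in layer $i$ and from $t$ to $q$ in the layer $i^\ast$ whose $a$-direction is the $b$-direction of layer $i$. This symmetry is what lets me re-derive the working layer and the current phase at $u$ without remembering the source, and it is the step I expect to require the most care.
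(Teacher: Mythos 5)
Your reduction of the routing ratio to angle-monotonicity of width $90^\circ+\alpha$ is fine, and your forward phase (repeat $a$-steps in the layer $G_i$ determined by the direction of $\vec{ut}$) matches the paper. The gap is in the backward phase. You define the step as ``move to the $b$-predecessor $y$ of $u$ \emph{that lies on the $b$-path of $t$},'' but membership in $P_{t,b}$ is global information: $u$ may be the $b$-nearest neighbor of several vertices, more than one of which can also lie in $W_{t,b}$ (the nesting of $b$-wedges along $P_{t,b}$ gives only a necessary condition), so your local surrogate test does not single out the vertex on $P_{t,b}$. The same problem afflicts your characterization of the turn vertex $x$ as ``the first vertex of the $a$-path with a $b$-predecessor in $W_{t,b}$'': such a predecessor may exist earlier and not lie on $P_{t,b}$, so the test can fire prematurely. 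Once a wrong candidate is chosen, your correctness argument---which rests on the trajectory being \emph{exactly} the path of Lemma~\ref{lem:angle-bound}---collapses, and you have no termination argument for the resulting walk. This is not a repairable detail within your plan: the paper explicitly notes that its algorithm does \emph{not} reproduce the Lemma~\ref{lem:angle-bound} path (its $\theta_a$- and $\theta_b$-edges interleave). What replaces your exact-tracing idea there is a weaker invariant (the $b$-path $P_{t,b}$ passes through or above the current vertex), a ``first clockwise neighbor $v$ with $v\in W_{t,b}$ and $u\in W_{v,b}$'' selection rule with a fallback $a$-step when no such $v$ exists, and a potential $d_a+d_b$ that strictly decreases, which together give termination at $t$; the clockwise-first choice is exactly what makes a wrong local choice impossible to be ``worse'' than the true predecessor.

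A second, independent problem is your ``endpoint symmetry'' device for recovering the layer and phase memorylessly: for general $k$ the $b$-wedge has opening $90^\circ-\alpha$ while every layer's $a$-wedge has opening $2\alpha$, so except when $\alpha=30^\circ$ the reversed $b$-path of $t$ is not an $a$-path of $t$ in any layer $i^\ast$, and the layers' $b$-wedges at $t$ overlap, so the direction of $\vec{tu}$ does not determine the working layer during the backward phase. Note also that the theorem does not require statelessness---the paper's algorithm fixes $G_i$ (and the choice of left versus right routing) once at the source, and ``2-local'' refers only to the graph information consulted at each step---so you imposed on yourself a stronger requirement and then relied on an unsupported symmetry to meet it.
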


 Before giving the algorithm, we explain why we need 2-locality.  Given a start vertex $q$ and a target vertex $t$, we can find, based on the angle of line $qt$, which of the $k$ 3-sweep graphs, say $G_i$ has $t \in W_{q,a}$.  Our routing algorithm will only use edges of $G_i$, so we need a way to tell if an edge of $H_k$ belongs to $G_i$. Consider an edge from current vertex $u$ to some vertex $v$.  From their coordinates, we can decide whether $v$ is in a \emph{positive} wedge of $u$ in $G_i$, i.e., one of $W_{u,a}, W_{u,b}$, or $W_{u,c}$.  If so, then, by checking the other neighbors of $u$, we can detect if $v$ is the unique $a$-, $b$-, or $c$-neighbor of $u$ in that wedge in $G_i$.  Otherwise, $u$ is in a positive wedge of $v$, and, using 2-locality, we can check the neighbors of $v$ to detect if $u$ is the unique $a$-, $b$-, or $c$-neighbor of $v$ in $G_i$.

For the special case of $\alpha=30^\circ$, $H_k$ is the full-$\Theta_6$-graph and our algorithm finds angle-monotone paths of width $120^\circ$ and achieves routing ratio 2.
In this case our algorithm, operating on a single 3-sweep graph, can be viewed as a slight variant of the algorithm of Bose et al.~\cite{BoseTheta6journal} for routing positively in a half-$\Theta_6$-graph. 
 Their algorithm achieves spanning ratio 2 but---as stated--- includes a tie-breaking rule that prevents it from finding angle monotone paths of width $120^\circ$.
 
 %
%
 An example where the tie-breaking rule in that algorithm causes it to find paths of width arbitrarily close to $180^\circ$ is shown in 
Figure~\ref{fig:known}(a)--(c). 
Our contribution is to simplify the statement of the algorithm, generalize to other angles, and give a much simpler proof of correctness using angle-monotonicity.

We briefly mention other approaches to routing.  
The standard \emph{$\Theta$-routing algorithm} forwards the message from the current vertex $v$ either to the destination (if the destination is adjacent to $v$), or to the closest vertex in the cone of $v$ that contains the destination. As illustrated in Figure~\ref{fig:known}(d), the standard $\Theta$-routing algorithm for $\Theta_k$-graphs may also yield paths of large width. We refer the reader to~\cite[Figure~20--22]{BoseCMRV16} for more such examples on full-$\Theta_{4m+4}$ and full-$\Theta_{10}$ graphs.
A recent paper by Bose et al.~\cite{Bose-constraints} gives yet another local routing algorithm for $\Theta_6$-graphs.  This algorithm does find angle monotone paths of width $120^\circ$, but requires knowledge of the source.

\begin{figure}[h]
\centering
\includegraphics[width=.7\textwidth]{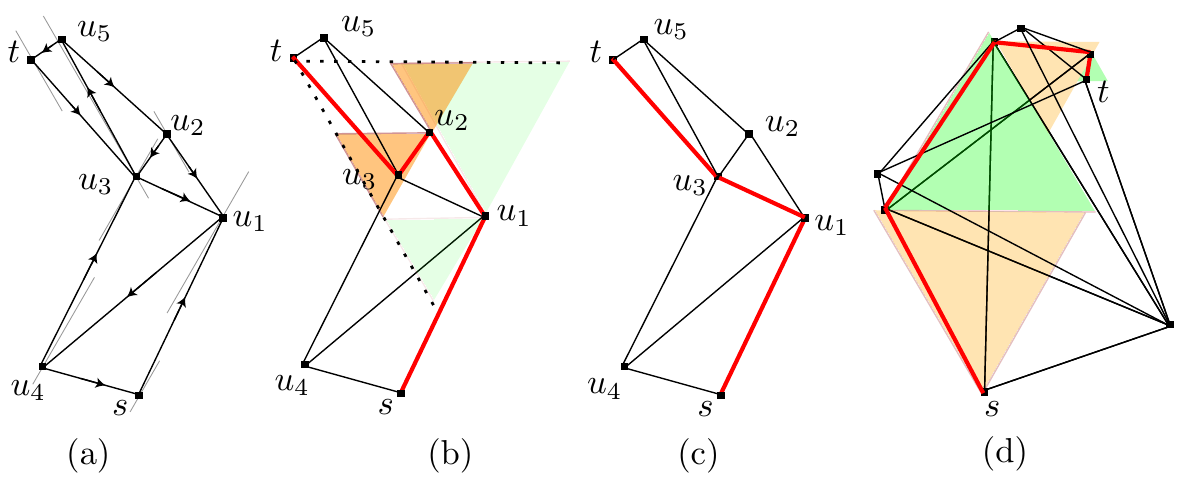}
\caption[Caption without FN]{ (a) A  half-$\Theta_6$-graph $G$ (with arrows indicating the edges into positive cones). 
(b) The routing algorithm  \emph{Algo-half-$\Theta_6$} of~\cite{BoseTheta6journal} on $G$ takes the path $s,u_1,u_2,u_3,t$. The edge from $u_1$ to $u_2$ is chosen following Case B of \emph{Algo-half-$\Theta_6$} (the algorithm favors staying close to the largest empty side)\footnotemark. The edge from $u_2$ to $u_3$ is chosen following Case C of   \emph{Algo-half-$\Theta_6$}. (c) Our algorithm takes the path $s,u_1,u_3,t$. (d)  An example illustrating the standard $\Theta$-routing algorithm on a full-$\Theta_6$-graph.
}
\label{fig:known}
\end{figure}





\paragraph{Algorithm $\mathcal{A}$ (Local Routing)} Let $H_k$ be a  $k$-layer 3-sweep graph  with angles  $\theta_a = 2\alpha, \theta_b = \theta_c = 90^\circ - \alpha$, and let  $q$ and $t$ be two vertices in $H_k$. 
 As discussed above, we can find out which 3-sweep graph, $G_i$, has $t$ in $W_{q,a}$. 
 We will route in $G_i$, using 2-locality to distinguish its edges as discussed above. For ease of description, orient the plane with $W_{q,a}$ pointing upward, centered on the vertical axis, so that edge BC of the reference triangle is horizontal. See Figure~\ref{fig:routing}. 
The general situation is that we have routed (forwarded the message) to some vertex $u$.  Initially $u=q$.   The algorithm stops when $u=t$.

\footnotetext {We note that cases B and D are reversed in the conference version~\cite{Bose:theta6:2012} versus the journal version~\cite{BoseTheta6journal}.}
 \begin{itemize}[noitemsep,topsep=0pt]
 \item While $t$ is an internal point of $W_{u,a}$, forward the message to $u'$, where $u'$ is the 
 $a$-neighbor of $u$ in $W_{u,a}$. See Figure~\ref{fig:routing}(a).  Observe that $u'$ is below or on the horizontal line through $t$.
 
\item At this point, $u$ either belongs to $W_{t,b}$ or  $W_{t,c}$ (possibly lying on the boundary of the wedge).  See Figures~\ref{fig:routing}(b)--(c). If $u$ belongs to $W_{t,b}$, call routine {\bf $\mathcal{A}_L$}, otherwise call routine {\bf $\mathcal{A}_R$}.
\end{itemize}

 

\noindent{\bf Algorithm $\mathcal{A}_L$ (Left Routing).}  Invariant: $u \in W_{t,b}$.  Until $u$ reaches $t$ do the following:
\begin{itemize}[noitemsep,topsep=0pt]
\item Case 1.  Forward the message to 
 the first clockwise neighbor $v$ of $u$ \changed{in $G_i$} such that $v \in W_{t,b}$ and $u \in W_{v,b}$, if such a vertex $v$ exists. 
 See Figure~\ref{fig:routing}(d).
 \item Case 2.
 If no such vertex $v$ exists, then forward the message
 to vertex $u'$, where $u'$ is the 
 \changed{$a$-neighbor of $u$ in $W_{u,a}$}.
 See Figure~\ref{fig:routing}(e). 
\end{itemize} 

\noindent{\bf Algorithm $\mathcal{A}_R$ (Right Routing).}   
Invariant: $u \in W_{t,c}$.  Symmetric to above.
 \medskip

 \begin{figure}[h]
\centering
\includegraphics[width=\textwidth]{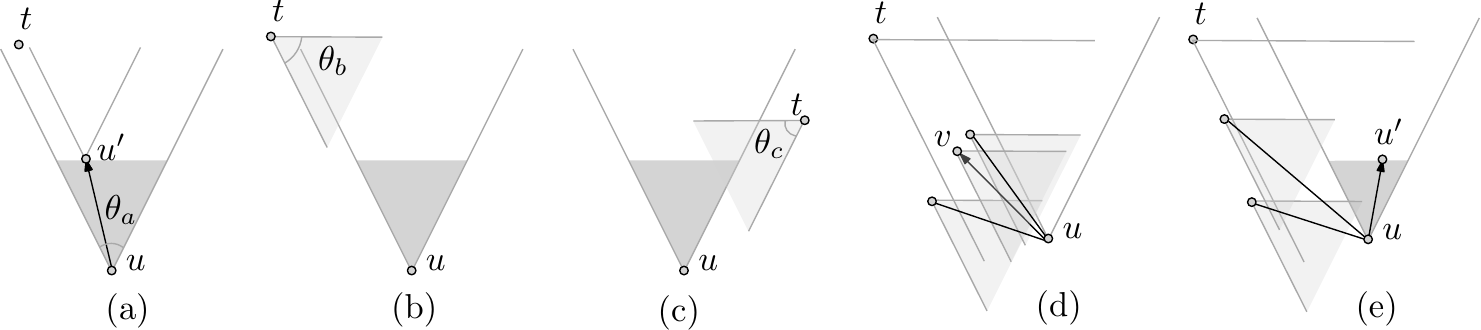}
\caption{Illustration for  algorithm $\mathcal{A}$. 
 }
\label{fig:routing}
\end{figure}
 
 We now prove that $\mathcal{A}$ finds an angle-monotone path of width
 $(90^\circ+\alpha)$ from the source $q$ to the destination $t$.
 Since we execute only one of $\mathcal{A}_L$ or $\mathcal{A}_R$ and they are symmetric, it suffices to consider the case where $\mathcal{A}_L$ is executed. The significant part of the proof is to show that the algorithm finds a path from $q$ to $t$. The fact that the path is angle monotone of width $(90^\circ+\alpha)$ follows immediately. In particular, the initial while loop of algorithm $\mathcal{A}$ uses only $\theta_a$-edges, and algorithm $\mathcal{A}_L$ uses only $\theta_b$- and $\theta_a$-edges.  Thus the path is angle monotone of width $(90^\circ+\alpha)$. 
 Note that the algorithm does not find a path with $\theta_a$-edges appearing before $\theta_b$-edges, as was guaranteed in Lemma~\ref{lem:angle-bound}.
 
In order to show that algorithm $\mathcal{A}$ finds a path from $q$ to $t$ we will show: (1) the invariant $u \in W_{t,b}$ holds for algorithm $\mathcal{A}_L$; (2) some measure improves at each routing step of the algorithm.

First consider the invariant $u \in W_{t,b}$.  
$W_{t,b}$ is bounded by two lines, $\ell$ and $\ell'$, where $\ell$ is the horizontal line through $t$.  To show that $u \in  W_{t,b}$, we must show that $u$ is below, or on, $\ell$, and to the right of, or on, $\ell'$. When we first call $\mathcal{A}_L$, $u$ is to the right of, or on, $\ell'$, and  each step of $\mathcal{A}_L$ preserves this property---see Figures~\ref{fig:routing}(d) and (e). It remains to prove that $u$ is below or on line $\ell$. We will prove the stronger invariant that $P_{t,b}$ goes through or above $u$, i.e.~that $P_{t,b}$
intersects the ray going vertically upward from $u$.

We begin by showing that this is true when we first call $\mathcal{A}_L$.
 If we call $\mathcal{A}_L$ because $q$ is on $\ell'$, then $P_{t,b}$ must pass through or above $q$.  
 The only other way to call $\mathcal{A}_L$ is because  we just completed a step of the while loop of $\mathcal{A}$  where $t$ was internal to $W_{u,a}$ but not internal to $W_{u',a}$, e.g., see Figure~\ref{fig:routing}(a).    By Lemma~\ref{lem:planar}, $P_{t,b}$  cannot cross the edge $(u,u')$. Hence it must pass above or through $u'$.   


Now consider a general step of  $\mathcal{A}_L$.  
We route from $u$  to vertex $w$ which is either vertex $v$ in Case 1 (Figure~\ref{fig:routing}(d)) or vertex $u'$ in Case 2 (Figure~\ref{fig:routing}(e)). Suppose \changed{(for a contradiction)} that the path $P_{t,b}$ does not go through or above $w$.   By induction we know that \remove{the path}$P_{t,b}$ goes through or above $u$. \remove{The path} By Lemma~\ref{lem:planar},  $P_{t,b}$ cannot cross the edge $(u,w)$. 
\changed{(This is where we use the assumption that $(u,w)$ is an edge of $G_i$.)}
Thus  \remove{the path}$P_{t,b}$ must go through $u$ and the other points of edge $(u,w)$ must lie above the path. 
 Let $x$ be the vertex before $u$ on \remove{the path}$P_{t,b}$.  Then  $x \in W_{t,b}$ and $u \in W_{x,b}$.  We now claim that the algorithm should have chosen $x$ rather than $w$.  First note that $x$ is a candidate for vertex $v$ in Case 1 of $\mathcal{A}_L$.  Thus the algorithm would not have moved to Case 2. 
 Next note that $x$ comes before $v$ in clockwise order around $u$, so the algorithm would have chosen $x$ rather than $v$. 

It remains to show that something improves at every step of the algorithm. Let $d_a$ be the distance from $u$ to the horizontal line through $t$. Let $d_b$ be the distance from $t$ to the line determined by the right boundary of $W_{u,a}$. In every iteration of the while loop of $\cal{A}$,  $d_a$ decreases and $d_b$ does not increase. In Case 2 of $\cal{A}_L$, $d_a$ decreases and $d_b$ does not increase. Finally, in Case 1 of $\cal{A}_L$, $d_b$ decreases and $d_a$ does not increase. Thus $d_a + d_b$ strictly improves, and the algorithm must terminate.     
The path found by the algorithm is an angle-monotone path of width $90^\circ+\frac{\theta_a}{2} = (90+\alpha)$.
 

\subsection{1-Local Routing on Full-$\Theta_6$-Graphs}
\label{sec:full-theta-routing}
We observed in Section~\ref{sec:full-graph} that because of symmetries \changed{when $k=6$} we really only have two 3-sweep graphs, which are in fact 
\changed{two half-$\Theta_6$-graphs which together form a full-$\Theta_6$-graph.}
\changed{For this case, we will show that our routing algorithm is 1-local by showing}
how to make the message forwarding decisions based on the 1-neighborhood of the current vertex $u$. Note that the tie-breaking rule that we used to  construct the graph for local routing, i.e., by choosing the most clockwise point, remains the same.

\changed{Suppose without loss of generality that we are routing in $G_1$ and are routing from $u$ to $t$, with $W_{u,a}$ oriented upwards as in the description of 
Algorithm $\mathcal{A}$.}
Recall that while $t$ is an internal point of $W_{u,a}$, Algorithm $\mathcal{A}$ forwards the message to $u'$, where $u'$ is the $a$-neighbor of $u$ in $W_{u,a}$. Since $u$ contains the information about its 1-neighborhood, it is straightforward to make the message forwarding decision.

At this point, we call routine {\bf $\mathcal{A}_L$} or {\bf $\mathcal{A}_R$} depending on whether $u$ belongs to $W_{t,b}$ or  $W_{t,c}$, respectively. By symmetry, it suffices to consider only the left routing  $\mathcal{A}_L$.

Case 1 of  $\mathcal{A}_L$ forwards the message to the first clockwise neighbor $v$ of $u$ \changed{in $G_1$} such that $v \in W_{t,b}$ and $u \in W_{v,b}$, if such a vertex $v$ exists. See Figure~\ref{fig:routing}(d). We now show how to decide the existence of such a vertex $v$ based on the \changed{1-neighborhood} 
of $u$.   

\changed{For any neighbor $q$ of $u$ in $G$, we can easily test if $q \in W_{t,b}$ and $u \in W_{q,b}$.  The issue is whether edge $(u,q)$ lies in $G_1$ or $G_2$.  
Since $q$ lies in a negative cone for $G_1$, edge $(u,q)$ is in $G_1$ if and only if $u$ is a $b$-neighbor of $q$.}
\remove{
We first find a set \changed{$P$} 
of potential candidates from the neighbors of $u$, i.e., 
\changed{$P$ consists of the neighbors of $u$ in $G$ that lie in $W_{t,b}$.}
We now check the vertices of $P$ (in clockwise order around $u$) to see whether 
$u \in W_{q,b}$. The test  whether  $u \in W_{q,b}$ can be done as follows.

{\color{green} J: I am confused about the case 1 of left routing. If we only need $u\in W_{v,b}$, then we can check that standing at $u$. Did we mean $u$ is a $b$-neighbor of $v$? }
}
Let $R_a$ (resp., $R_b$) be the  closed  region determined by the intersection of  $W_{q,b}$ and $W_{u,a}$. 
Let $R$ be the  closed region inside
$W_{q,b}$ bounded by the regions $R_a$ and $R_b$, as shown in gray in Figure~\ref{fig:full-6routing}(a).

\begin{figure}[h]
\centering
\includegraphics[width=.5\textwidth]{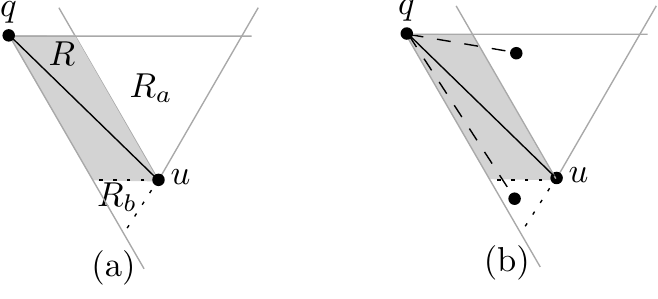}
\caption{Illustration for the routing in a full-$\Theta_6$ graph.}
\label{fig:full-6routing}
\end{figure}

Since the edge $(q,u)$ corresponds to a nearest neighbor in some cone around either $q$ or $u$, the region $R$ cannot contain any vertex except for $u$ and $q$. 
\changed{(This is where we crucially use the fact that $G$ is a full-$\Theta_6$-graph.)}
\changed{Thus $u$ is the $b$-neighbor of $q$ if and only if both $R_a$ and $R_b$ are empty  of any vertex except for $u$ and $q$.  We can test this (even in the general case)}
by checking the $a$- and $b$-neighbors of $u$ in $W_{u,a}$ and $W_{u,b}$, respectively.

Case 2 of  $\mathcal{A}_L$ forwards the message to vertex $u'$, where $u'$ is the nearest neighbor of $u$ in $W_{u,a}$. This case is straightforward since $u$ contains the information about its 1-neighborhood.


\section{Angle-Monotone Graphs with Steiner Points}
\label{app:steiner}
The angle-monotone graphs constructed in  Sections~\ref{sec:subq}--\ref{sec:small-size} can have a width of $90^\circ$ or larger. Spanning graphs of smaller width can be constructed if we allow Steiner points. 
 For example, any spanning planar triangulation with no angle larger than $\beta$ is an angle-monotone graph of width  $\beta$~\cite{Lubiw-O'Rourke}. Since the angles of a triangle sum to $180^\circ$,  the best possible value for $\beta$ is $60^\circ$. We refer the interested reader to~\cite{Bishop16a,BernEG94} for related works on generating meshes with good angle properties. In this section we construct angle-monotone graphs of width $\gamma$, 
 \changed{for any point set $S$ and} 
 any given $\gamma \in (0,90^\circ]$ where $t=(360^\circ/\gamma)$ is an integer. 
 However, the size of the graph depends on some distance parameters of the point set.  


Here we use a pair of non-obtuse triangles 
 $\Delta ABC$ and $\Delta A'B'C'$ to construct the angle-monotone graphs,
 where $\angle BAC$ coincides with $\angle B'A'C'$. 
 Assume that $\theta_a = \theta_{a'} = \theta_{b'} = \theta_{c} = \gamma/2$, as illustrated in Figure~\ref{fig:sp}(a).
 Consider the graph $G_{\it pair}$ obtained by taking the union of 
 the 3-sweep graphs on $S$ with respect to 
 angles $\{\theta_a,\theta_b,\theta_c\}$ and $\{\theta_{a'},\theta_{b'},\theta_{c'}\}$.
 The following lemma is immediate from Lemma~\ref{lem:angle-bound}.
 
\begin{lemma}
\label{lem:angle-bound-2}
 Let $q$ and $t$ be two vertices in $G_{\it pair}$ such that $t$ lies inside $W_{q,\theta_a}$. If $t$ lies to the right of $P_{q,\theta_a}$ (resp., left of $P_{q,\theta_{a'}}$), then there is an angle-monotone path of width $(\theta_a+\theta_c) = \gamma$ (resp., $(\theta_{a'}+\theta_{b'})=\gamma$  between $q$ and $t$.  Figure~\ref{fig:sp}(b) shows the potential location of $t$ in gray.
\end{lemma}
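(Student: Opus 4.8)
The plan is to derive the statement directly from Lemma~\ref{lem:angle-bound}, applied separately to each of the two 3-sweep graphs whose union forms $G_{\it pair}$. Write $G$ for the 3-sweep graph on $S$ with angles $\{\theta_a,\theta_b,\theta_c\}$ and $G'$ for the one with angles $\{\theta_{a'},\theta_{b'},\theta_{c'}\}$, so that $G_{\it pair}=G\cup G'$. The key preliminary observation is that, because $\angle BAC$ coincides with $\angle B'A'C'$ and the two triangles share their apex orientation, the $a$-wedge $W_{q,\theta_a}$ of $G$ and the $a'$-wedge $W_{q,\theta_{a'}}$ of $G'$ are one and the same wedge. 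Hence a point $t$ lying inside $W_{q,\theta_a}$ automatically lies inside $W_{q,\theta_{a'}}$ as well, so the wedge hypothesis of Lemma~\ref{lem:angle-bound} is met no matter which of the two graphs we choose to invoke.

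First I would treat the case where $t$ lies to the right of $\overline P_{q,\theta_a}$. Applying Lemma~\ref{lem:angle-bound} to $G$, with the target to the right of the extended $a$-path, produces an angle-monotone path from $q$ to $t$ of width $\theta_a+\theta_c$; substituting $\theta_a=\theta_c=\gamma/2$ makes this width exactly $\gamma$, and since the path lies in $G\subseteq G_{\it pair}$ this case is complete. For the symmetric case, where $t$ lies to the left of $\overline P_{q,\theta_{a'}}$, I would instead apply Lemma~\ref{lem:angle-bound} to $G'$: the target now lies to the left of the extended $a'$-path, so the lemma yields a path of width $\theta_{a'}+\theta_{b'}$, which equals $\gamma$ after substituting $\theta_{a'}=\theta_{b'}=\gamma/2$. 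This path lies in $G'\subseteq G_{\it pair}$, finishing the argument.

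There is essentially no obstacle here, since all the real content is already carried by Lemma~\ref{lem:angle-bound}; the only thing to be careful about is the bookkeeping—matching ``right of the $a$-path'' with the $\theta_a+\theta_c$ bound and ``left of the $a'$-path'' with the $\theta_{a'}+\theta_{b'}$ bound, and confirming that both sums collapse to $\gamma$ under the assignment $\theta_a=\theta_{a'}=\theta_{b'}=\theta_c=\gamma/2$. The single geometric fact worth stating explicitly (and the reason the two cases can be packaged into one lemma about $G_{\it pair}$) is that the two constituent $a$-wedges coincide, so that the region shown in gray in Figure~\ref{fig:sp}(b) is exactly the union of the two subcases handled above.
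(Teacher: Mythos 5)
Your proposal is correct and matches the paper's approach: the paper simply declares the lemma ``immediate from Lemma~\ref{lem:angle-bound}'', and your argument is exactly that application of Lemma~\ref{lem:angle-bound} to each of the two constituent 3-sweep graphs, using the coincidence of the $a$- and $a'$-wedges and the assignment $\theta_a=\theta_{a'}=\theta_{b'}=\theta_c=\gamma/2$.
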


By Lemma~\ref{lem:angle-bound-2}, if $P_{q,\theta_a}$ coincides with  $P_{q,\theta_{a'}}$, then for every point $t \in W_{q,\theta_a}$, we can find an angle-monotone path of width $\gamma$ in $G_{\it pair}$. In the following we show how to insert some additional points (i.e., Steiner points) in $S$ such that we can always find such an angle-monotone path. 
 We refer the reader to Figure~\ref{fig:sp}(c). Assume that $A = A_0$, and let $\ell_{BC}$ be the line through the nearest neighbor $q'$ of $q$ in $W_{q,a}$.  We now construct a sequence of successive triangles $\Delta A_{i-1}B_iA_{i}$ (similar to $\Delta A'B'C'$), where $1\le i\le k$, on the right side of $W_{q,a}$ such that $B_i$ lies on $\ell_{BC}$,
 and then construct Steiner point $s_i$ at $A_i$.

 We choose $k$ to be the smallest integer such that  
 $q'$ does not belong to $W_{s_k,a}$, e.g., see  Figure~\ref{fig:sp}(c), or  the nearest neighbor of $s_k$ in $W_{s_k,a'}$ coincides with $q'$, e.g., see  Figure~\ref{fig:sp}(d). We add the edges $(s_{i-1},s_i)$,  the edge between $q'$ and its corresponding Steiner point, and the edges $(z,s_i)$, where $z\in S$ and the Steiner point $s_i$ is a nearest neighbor of $z$ in $W_{z,b'}$.  Let the resulting graph be $H$, which we will refer to as a \emph{Steiner graph}. Simple trigonometry shows that $A_iC = A_{i-1}C -\frac{A_{i-1}C}{4\cos^2(\gamma/2)}\le \frac{3A_{i-1}C}{4}$. 
 
For every pair of points $u,v\in S$ let  
  $\lambda_{u,v}$ be the smallest distance between
 a pair of non-overlapping parallel lines passing through
 $u$ and $v$ with angle of inclination $90+\frac{\gamma j}{2}$, for some positive integer $j$, e.g., see  Figure~\ref{fig:sp}(e). Define $\lambda$ to be smallest such distance over all $\{u,v\}\in S$. 
 Since at each step the new side length $A_iC$ is at most a fraction of the previous side length $A_{i-1}C$, $k$ is bounded by $O(\log(AC/\lambda))$.

\begin{figure}[h]
\centering
\includegraphics[width=\textwidth]{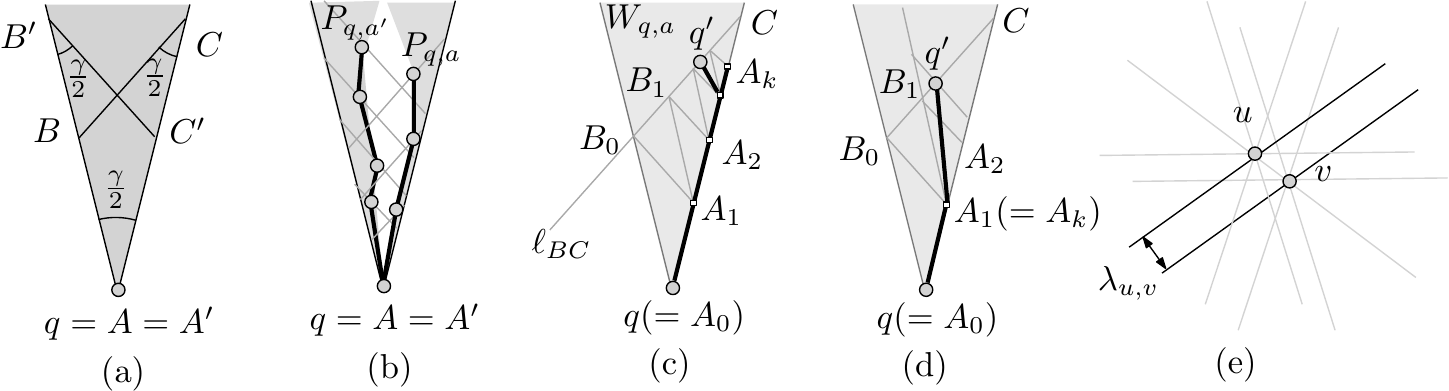}
\caption{(a) $\Delta ABC$ and $\Delta A'B'C'$. (b)--(d) Construction of $H$. (e) $\lambda_{u,v}$. }
\label{fig:sp}
\end{figure}

\begin{lemma}
\label{lem:angle-bound-3}
 Let $q$ and $t$ be two points in $S$ such that $t$ lies inside $W_{q,\theta_a}$. Then there exists an angle-monotone path of width $(\theta_a+\theta_c) = (\theta_{a'}+\theta_{b'})=\gamma$ 
 between $q$ and $t$ in $H$.
\end{lemma}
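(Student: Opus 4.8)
The plan is to reduce the whole statement to the observation recorded immediately after Lemma~\ref{lem:angle-bound-2}: if the paths $P_{q,\theta_a}$ and $P_{q,\theta_{a'}}$ coincide, then every $t\in W_{q,\theta_a}$ lies either to the right of this common path or to its left, and Lemma~\ref{lem:angle-bound-2} then furnishes a width-$\gamma$ path in either case. In general the two paths peel apart at $q$ and bound an uncovered \emph{gap}, namely the points lying simultaneously to the left of $P_{q,\theta_a}$ and to the right of $P_{q,\theta_{a'}}$; the sole purpose of the Steiner points is to erase this gap inside $H$. So the lemma comes down to a single structural claim: in $H$ the $a$-path and the $a'$-path of $q$ coincide, running along a common spine that hugs the right side of $W_{q,a}$, so that the entire wedge lies to its left.

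To establish this I would track the inserted points $q=s_0,s_1,\ldots,s_k$. Each triangle $\Delta s_{i-1}B_is_i$ is similar to $\Delta A'B'C'$, is seated with its $\theta_{a'}$-apex at $s_{i-1}$, has $B_i$ on the line $\ell_{BC}$ through the $a$-nearest neighbor $q'$ of $q$, and is empty of points of $S$; together with the fact that the $s_i$ are collinear along the right side of $W_{q,a}$, this forces $s_i$ to be simultaneously the $a$- and the $a'$-nearest neighbor of $s_{i-1}$, so the two paths indeed run together along the spine. The two stopping rules guarantee that the spine reaches $q'$: either $q'$ leaves $W_{s_k,a}$, so no part of the gap survives beyond $s_k$, or $q'$ becomes the $a'$-nearest neighbor of $s_k$, whereupon the added edge $(s_k,q')$ splices the spine onto the original path at $q'$. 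Since the spine lies on the right side of $W_{q,a}$, every $t$ in the wedge is to its left, so I route to any such $t$ exactly as in the left case of Lemma~\ref{lem:angle-bound-2}: follow the spine from $q$ up to the vertex where the $b'$-path $P_{t,b'}$ meets it, then trace $P_{t,b'}$ back to $t$, producing a path of width $\theta_{a'}+\theta_{b'}=\gamma$. The edges $(z,s_i)$, added whenever $s_i$ is the $b'$-nearest neighbor of a point $z\in S$, are exactly what guarantee that $P_{t,b'}$ terminates on the spine.

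The step I expect to be the main obstacle is the structural claim that each $s_i$ is simultaneously the $a$- and the $a'$-nearest neighbor of $s_{i-1}$, so that the two nearest-neighbor relations agree all the way up to $q'$; this is the geometric heart of the argument and rests on combining the emptiness of $W_{q,a}$ below $\ell_{BC}$ (immediate from the choice of $q'$) with the empty similar triangles, plus the angular bookkeeping that certifies the concatenation of the spine with $P_{t,b'}$ never escapes a single $\gamma$-wide cone. The remaining, more quantitative point is the count of Steiner points: the relation $A_iC = A_{i-1}C - \frac{A_{i-1}C}{4\cos^2(\gamma/2)} \le \tfrac34 A_{i-1}C$ shows the governing length contracts geometrically, giving $k=O(\log(AC/\lambda))$ and hence the claimed dependence of the size of $H$ on the distance parameter $\lambda$.
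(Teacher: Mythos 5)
Your reduction rests on a structural claim that does not hold, and it routes along the wrong Steiner chain. Note first that the Steiner points are not folded into the sweep construction: $H$ is $G_{\it pair}$ (whose nearest-neighbor paths $P_{q,a}$, $P_{t,b'}$ are computed over $S$ alone) plus explicitly added edges, so the assertion that ``each $s_i$ is simultaneously the $a$- and $a'$-nearest neighbor of $s_{i-1}$ in $H$'' is not how the graph is defined. But even granting a re-reading in which the chain $q,s_1,\ldots,s_k,q'$ is a common $a$/$a'$-path of $q$, this spine extends only from $q$ down to the line $\ell_{BC}$ through $q'$ and then jumps to $q'$; beyond $\ell_{BC}$ it does not hug the right side of $W_{q,a}$ (any continuation hugs the right side of $W_{q',a}$, which can sit well inside $W_{q,a}$). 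Hence the claim that ``the entire wedge lies to its left'' is false, and with it your attempt to treat every target by the left case of Lemma~\ref{lem:angle-bound-2}. Targets to the right of $P_{q,a}$ must be handled separately---this is exactly the first case of the paper's proof, which invokes the right case of Lemma~\ref{lem:angle-bound-2} (the $c$-path of $t$), a case your argument omits.

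Second, for $t$ to the left of $P_{q,a}$ your route ``follow the spine of $q$ until $P_{t,b'}$ meets it'' need not exist: for $t$ deep in the wedge, $P_{t,b'}$ approaches $P_{q,a}$ near some vertex $z$ far from $q$, and the edge supplied by the construction joins the last vertex $r$ of $P_{t,b'}$ lying left of $P_{q,a}$ to a Steiner point on the chain of $z$, not on the chain of $q$; nothing forces $P_{t,b'}$ to reach $q$'s own chain, which lives entirely between $q$ and $\ell_{BC}$. The paper instead keeps the original path $P_{q,a}$ as the backbone, shows by the analysis of Lemma~\ref{lem:angle-bound} that $P_{q,a}$ and $P_{t,b'}$ must intersect, and uses $z$'s Steiner chain only as the final splice $t,\ldots,r,s,\ldots,z,\ldots,q$ when the intersection is not at a vertex. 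So the missing ingredients are (i) the case of $t$ right of $P_{q,a}$, and (ii) the intersection argument for $P_{q,a}$ and $P_{t,b'}$ together with the correct identification of which Steiner chain supplies the connecting edge. The geometric-decay relation $A_iC\le \frac{3}{4}A_{i-1}C$ you cite controls only the number of Steiner points, not the correctness of the routing.
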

\begin{proof}
If $t$ lies to the right of $P_{q,a}$, then
  by Lemma~\ref{lem:angle-bound-2}, 
 there exists an angle-monotone path of width $\gamma$ between $q$ and $t$ in $H$.
 
Consider now the case when  $t$ lies to the left of $P_{q,\theta_a}$. 
 Using an analysis similar to the proof of Lemma~\ref{lem:angle-bound} we can observe that $P_{q,a}$ and $P_{t,b'}$ must intersect. If they intersect at a vertex $v$, then the path  $t,\ldots,v,\ldots,q$ is an angle-monotone path of width $\gamma$. Otherwise, let $r$ be the last vertex on $P_{t,b'}$ to the left of $P_{q,a}$, and let $z$ be the last vertex on  $P_{q,a}$  that contains $r$. By construction, $r$ is adjacent to a Steiner point $s$ on the right side of $W_{z,a}$. The path  $t,\ldots,r,s,\ldots,q$ determines the required angle-monotone path of width $\gamma$.
\end{proof}

Let $\mu$ be the largest Euclidean distance determined  by a pair of points in $S$, and let $\mathcal{H}$ be the graph obtained by taking the union of Steiner graphs $H_1,\ldots,H_t$, where $H_i$, $1\le i\le t$, is computed by rotating the plane by $\frac{360^\circ (i-1)}{t})$. Then an analysis similar to the proof of Theorem~\ref{thm:k3s} yields the following result.
  
\begin{lemma}
$\mathcal{H}$ is an angle-monotone graph of width $\gamma$, and the number of edges in $\mathcal{H}$ is  $O(\frac{n}{\gamma}\log \frac{\mu}{\lambda})$. 
\end{lemma}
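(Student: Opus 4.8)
The plan is to mirror the proof of Theorem~\ref{thm:k3s}, replacing the single-layer $3$-sweep graphs $G_i$ by the Steiner graphs $H_i$ and replacing Lemma~\ref{lem:angle-bound} by Lemma~\ref{lem:angle-bound-3}. There are two things to establish: that $\mathcal{H}$ contains an angle-monotone path of width $\gamma$ between every pair of points of $S$, and that $\mathcal{H}$ has the claimed number of edges.

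For angle-monotonicity, I would fix two points $q,v \in S$ and consider the direction of the vector $\vec{qv}$. Each Steiner graph $H_i$ is built from the orientation obtained by rotating the plane by $\frac{360^\circ(i-1)}{t}$, so its apex wedge $W_{q,\theta_a}$ (of angular width $\theta_a=\gamma/2$) points in a direction that advances by $\gamma$ as $i$ increases. If $\vec{qv}$ falls inside the apex wedge of some $H_i$, then Lemma~\ref{lem:angle-bound-3} directly produces an angle-monotone path of width $\gamma$ from $q$ to $v$ inside $H_i\subseteq\mathcal{H}$. The $t=360^\circ/\gamma$ apex wedges account for directions spanning only $t\cdot(\gamma/2)=180^\circ$, so to cover the remaining directions I would exploit the fact that angle-monotonicity is symmetric in the two endpoints of a path (the same observation used in the discussion following Theorem~\ref{thm:k3s}). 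Concretely, whenever $\vec{qv}$ is not captured by any apex wedge, the opposite vector $\vec{vq}$ lies in the apex wedge of some rotation; applying Lemma~\ref{lem:angle-bound-3} with the roles of the endpoints reversed yields a width-$\gamma$ angle-monotone path from $v$ to $q$, and reversing this path gives the desired path from $q$ to $v$.

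For the edge count, I would first bound the size of a single Steiner graph $H_i$. Its underlying pair of $3$-sweep graphs $G_{\it pair}$ is planar (Lemma~\ref{lem:planar} applied to each of the two sweeps), contributing $O(n)$ edges. The Steiner machinery attaches, to each point $q$ and its $a$-nearest neighbor $q'$, a chain $s_1,\ldots,s_k$ of Steiner points together with their incident edges; since each successive triangle shrinks the relevant side by a constant factor, namely $A_iC\le \tfrac34 A_{i-1}C$, the chain length satisfies $k=O(\log(AC/\lambda))=O(\log(\mu/\lambda))$, and there are $O(n)$ such chains. Hence each $H_i$ has $O(n\log(\mu/\lambda))$ edges. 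Taking the union over the $t=\Theta(1/\gamma)$ rotations multiplies this by $O(1/\gamma)$, giving $O(\tfrac{n}{\gamma}\log\tfrac{\mu}{\lambda})$ edges in total, as claimed.

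The hard part will be the coverage argument in the first step: one must verify that the width-$(\gamma/2)$ apex wedges, taken over all $t$ rotations and augmented by the endpoint-reversal (equivalently, their $180^\circ$ reflections), leave no gap in the set of possible directions $\vec{qv}$, so that Lemma~\ref{lem:angle-bound-3} applies to every pair. A secondary point requiring care is confirming that the Steiner edges $(z,s_i)$, joining each $z\in S$ to the Steiner points for which it is a $b'$-nearest neighbor, do not inflate the per-graph edge count beyond $O(n\log(\mu/\lambda))$.
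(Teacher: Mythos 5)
Your overall structure is the paper's intended one: the paper gives no detailed argument for this lemma, saying only that ``an analysis similar to the proof of Theorem~\ref{thm:k3s}'' applies, i.e., for each pair of points invoke Lemma~\ref{lem:angle-bound-3} in a rotated copy whose apex wedge contains the other point, and charge the edges per copy. Your edge count is essentially right: each point of $S$ gets a Steiner chain of length $O(\log\frac{\mu}{\lambda})$ because $A_iC\le \frac{3}{4}A_{i-1}C$, the edges $(z,s_i)$ contribute only $O(n)$ per copy (one $b'$-nearest neighbor per $z$ after tie-breaking), and the two underlying sweeps contribute $O(n)$ edges (their union need not be planar, but that does not matter), so each $H_i$ has $O(n\log\frac{\mu}{\lambda})$ edges and the union has $O(\frac{n}{\gamma}\log\frac{\mu}{\lambda})$.

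The genuine gap is exactly the step you flagged as ``the hard part,'' and the fix you propose does not work in general. The wedge covered by Lemma~\ref{lem:angle-bound-3} is $W_{q,\theta_a}$ of width $\theta_a=\gamma/2$, while the $t=360^\circ/\gamma$ rotations advance by $\gamma$; hence the directions $\vec{qv}$ served at $q$ over all copies form $t$ arcs of width $\gamma/2$ spaced $\gamma$ apart, a set $C$ of total measure $180^\circ$ that is invariant under rotation by $\gamma$. Endpoint reversal adds $C+180^\circ$. Since $180^\circ=(t/2)\gamma$, when $t$ is even this set equals $C$ itself, so the gaps (also of measure $180^\circ$) remain uncovered: for example with $\gamma=90^\circ$ ($t=4$), a pair whose direction bisects two consecutive apex wedges is captured neither at $q$ nor at $v$ in any copy. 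Reversal closes the gaps only when $t$ is odd, since then $180^\circ\equiv\gamma/2\pmod{\gamma}$. So your assertion that whenever $\vec{qv}$ misses every apex wedge the opposite vector $\vec{vq}$ is captured is false for even $t$, and the coverage argument collapses there. The natural repair (and what the ``similar analysis'' must implicitly mean for the statement to hold as written) is to take rotations in increments equal to the wedge width $\gamma/2$, i.e., $720^\circ/\gamma$ copies, which tiles all directions and changes the edge bound only by a constant factor; note the issue is invisible in Theorem~\ref{thm:k3s} itself because there the apex angle $2\alpha$ equals the rotation increment $360^\circ/k$.
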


\section{Open Questions}
\begin{enumerate}
\item (from~\cite{angle-mono}) What is $\gamma_{\rm min}$, the smallest $\gamma$ such that every point set has a planar angle-monotone graph of width $\gamma$? It is known that $90^\circ < \gamma_{\rm min} \le 120^\circ$. 
\item We showed that every set of $n$ points admits an angle-monotone graph of width $90^\circ$ with $o(n^2)$ edges, but can a better bound be proved? $O(n \log n)$ edges?  Even $O(n)$ is not ruled out.  
\item Using Steiner points, we can construct angle-monotone graphs of width $\gamma$, for any given $\gamma>0$, however, the size of the graph depends on some distance parameters of the point set. 
 What is the smallest $\gamma$ such that every point set has an angle-monotone Steiner graph with width $\gamma$  and $o(n^2)$ edges?
\end{enumerate}

\bibliographystyle{plain} 
\bibliography{angle-monotone,incchord}

\begin{thebibliography}{10}

\bibitem{self-approach}
Soroush Alamdari, Timothy~M. Chan, Elyot Grant, Anna Lubiw, and Vinayak Pathak.
\newblock Self-approaching graphs.
\newblock In Walter Didimo and Maurizio Patrignani, editors, {\em Proceedings
  of the 20th International Symposium on Graph Drawing (GD)}, volume 7704 of
  {\em LNCS}, pages 260--271. Springer, 2013.

\bibitem{AlamdariCGLP12}
Soroush Alamdari, Timothy~M. Chan, Elyot Grant, Anna Lubiw, and Vinayak Pathak.
\newblock Self-approaching graphs.
\newblock In {\em Proc. of GD}, volume 7704 of {\em LNCS}, pages 260--271.
  Springer, 2013.

\bibitem{BahooDMM17}
Yeganeh Bahoo, Stephane Durocher, Sahar Mehrpour, and Debajyoti Mondal.
\newblock Exploring increasing-chord paths and trees.
\newblock In {\em Proceedings of the 29th Canadian Conference on Computational
  Geometry (CCCG)}, 2017.

\bibitem{BernEG94}
Marshall~W. Bern, David Eppstein, and John~R. Gilbert.
\newblock Provably good mesh generation.
\newblock {\em Journal of Computer and System Sciences}, 48(3):384--409, 1994.

\bibitem{Bishop16a}
Christopher~J. Bishop.
\newblock Nonobtuse triangulations of pslgs.
\newblock {\em Discrete {\&} Computational Geometry}, 56(1):43--92, 2016.

\bibitem{angle-mono}
Nicolas Bonichon, Prosenjit Bose, Paz Carmi, Irina Kostitsyna, Anna Lubiw, and
  Sander Verdonschot.
\newblock Gabriel triangulations and angle-monotone graphs: Local routing and
  recognition.
\newblock In {\em Proceedings of the 24th International Symposium on Graph
  Drawing and Network Visualization (GD)}, volume 9801 of {\em LNCS}, pages
  519--531. Springer, 2016.

\bibitem{bonichonTD}
Nicolas Bonichon, Cyril Gavoille, Nicolas Hanusse, and David Ilcinkas.
\newblock Connections between theta-graphs, {D}elaunay triangulations, and
  orthogonal surfaces.
\newblock In Dimitrios~M. Thilikos, editor, {\em Proceedings of the 36th
  International Workshop Graph Theoretic Concepts in Computer Science (WG)},
  volume 6410 of {\em LNCS}, pages 266--278, 2010.

\bibitem{BoseCCS10}
Prosenjit Bose, Paz Carmi, S{\'{e}}bastien Collette, and Michiel H.~M. Smid.
\newblock On the stretch factor of convex {D}elaunay graphs.
\newblock {\em Journal of Computational Geometry}, 1(1):41--56, 2010.

\bibitem{BoseCMRV16}
Prosenjit Bose, Jean{-}Lou~De Carufel, Pat Morin, Andr{\'{e}} van Renssen, and
  Sander Verdonschot.
\newblock Towards tight bounds on theta-graphs: More is not always better.
\newblock {\em Theoretical Computer Science}, 616:70--93, 2016.

\bibitem{Bose:theta6:2012}
Prosenjit Bose, Rolf Fagerberg, Andr{\'e} van Renssen, and Sander Verdonschot.
\newblock Competitive routing in the half-$\theta_6$-graph.
\newblock In {\em Proc. 23rd ACM--SIAM Symposium on Discrete Algorithms}, pages
  1319--1328, 2012.

\bibitem{BoseTheta6journal}
Prosenjit Bose, Rolf Fagerberg, Andr{\'{e}} van Renssen, and Sander
  Verdonschot.
\newblock Optimal local routing on {D}elaunay triangulations defined by empty
  equilateral triangles.
\newblock {\em {SIAM} Journal on Computing}, 44(6):1626--1649, 2015.

\bibitem{Bose-constraints}
Prosenjit Bose, Rolf Fagerberg, Andr{\'e} van Renssen, and Sander Verdonschot.
\newblock Competitive local routing with constraints.
\newblock {\em Journal of Computational Geometry}, 8(1):125–152, 2017.

\bibitem{Chew86}
L.~P. Chew.
\newblock There is a planar graph almost as good as the complete graph.
\newblock In {\em Proceedings of the 2nd Annual Symposium on Computational
  Geometry (SoCG)}, pages 169--177, 1986.

\bibitem{D-Frati-G}
Hooman~Reisi Dehkordi, Fabrizio Frati, and Joachim Gudmundsson.
\newblock Increasing-chord graphs on point sets.
\newblock {\em Journal of Graph Algorithms and Applications}, 19(2):761--778,
  2015.

\bibitem{Erdos-Szekeres}
Paul Erd\"os and George Szekeres.
\newblock A combinatorial theorem in geometry.
\newblock {\em Compositio Math.}, 2:463--470, 1935.

\bibitem{Lubiw-O'Rourke}
Anna Lubiw and Joseph O'Rourke.
\newblock Angle-monotone paths in non-obtuse triangulations.
\newblock In {\em Proceedings of the 29th Canadian Conference on Computational
  Geometry (CCCG)}, 2017.

\bibitem{MastakasS15}
Konstantinos Mastakas and Antonios Symvonis.
\newblock On the construction of increasing-chord graphs on convex point sets.
\newblock In {\em Proceedings of the 6th International Conference on
  Information, Intelligence, Systems and Applications (IISA)}, pages 1--6.
  {IEEE}, 2015.

\bibitem{NollenburgPR16}
Martin N{\"{o}}llenburg, Roman Prutkin, and Ignaz Rutter.
\newblock On self-approaching and increasing-chord drawings of 3-connected
  planar graphs.
\newblock {\em Journal of Computational Geometry}, 7(1):47--69, 2016.

\bibitem{Rote}
G.~Rote.
\newblock Curves with increasing chords.
\newblock {\em Mathematical Proceedings of the Cambridge Philosophical
  Society}, 115:1--12, 1994.

\bibitem{Urabe96}
Masatsugu Urabe.
\newblock On a partition into convex polygons.
\newblock {\em Discrete Applied Mathematics}, 64(2):179--191, 1996.

\end{thebibliography}
\end{document}